\documentclass[pdflatex,sn-mathphys-num]{sn-jnl}% Math and Physical Sciences Numbered Reference Style
\usepackage{graphicx}%
\usepackage{multirow}%
\usepackage{amsmath,amssymb,amsfonts}%
\usepackage{amsthm}%
\usepackage{mathrsfs}%
\usepackage[title]{appendix}%
\usepackage{xcolor}%
\usepackage{textcomp}%
\usepackage{manyfoot}%
\usepackage{booktabs}%
\usepackage{algorithm}%
\usepackage{algorithmicx}%
\usepackage{algpseudocode}%
\usepackage{listings}%

\usepackage{cmsrb}
\usepackage[OT2,T1]{fontenc}

\theoremstyle{thmstyleone}%
\newtheorem{theorem}{Theorem}%  meant for continuous numbers
\theoremstyle{thmstyletwo}%
\newtheorem{example}{Example}%
\newtheorem{remark}{Remark}%

\theoremstyle{thmstylethree}%
\newtheorem{definition}{Definition}%

\usepackage{xspace}
\usepackage{graphicx}
\usepackage{subcaption}
\usepackage{tikz}
\usetikzlibrary{quotes}
\usetikzlibrary{positioning, shapes.geometric}
\usepackage{xcolor} 
\usepackage{cleveref}
\usepackage{todonotes}

\usepackage{amsthm}

\newcommand{\G}{{\mathcal{G}}}
\renewcommand{\L}{{\mathcal{L}}}

\newcommand{\MP}{{\text{MP}}}

\newcommand{\RT}{{RT}}

\newcommand{\R}{\mathbb{R}}
\newcommand{\Q}{\mathbb{Q}}
\newcommand{\Nat}{\mathbb{N}}
\newcommand{\Z}{\mathbb{Z}}
\newcommand{\bbN}{\mathbb{N}}
\newcommand{\bbQ}{\mathbb{Q}}
\newcommand{\bbR}{\mathbb{R}}

\newcommand{\set}[1]{\ensuremath{\left\lbrace #1 \right\rbrace}}
\newcommand{\zug}[1]{\langle #1 \rangle}
\newcommand{\tup}[1]{\langle #1 \rangle}
\newcommand{\floor}[1]{\lfloor #1 \rfloor}
\newcommand{\stam}[1]{}

\renewcommand{\inf}{\mathtt{inf}}

\newcommand{\play}{\mathtt{play}}
\renewcommand{\path}{\mathtt{path}}
\newcommand{\energy}{\mathtt{energy}}
\newcommand{\Min}{\text{Min}\xspace}
\newcommand{\Max}{\text{Max}\xspace}
\newcommand{\prefN}{\pi^{\leq n}}
\newcommand{\pref}[1]{\pi^{\leq #1}}
\newcommand{\Pot}{\text{Pot}}
\newcommand{\St}{\text{St}}
\newcommand{\block}{\sigma^\text{blk}}
\newcommand{\pblock}{\pi^\text{blk}}
\newcommand{\budget}{\sigma^\text{bgt}}
\newcommand{\neig}{\mathcal{N}}
\newcommand{\Bal}{\text{Bal}}
\newcommand{\Mil}{\text{Mil}}

\newcommand{\val}{\mathtt{val}}
\newcommand{\len}{\mathtt{len}}
\newcommand{\xb}{\mathfrak{B}}
\newcommand{\disc}[1]{\mathtt{bif}^{#1}} %there's no such thing as ``mathbf'', also mathtt is nice.

\newcommand{\PO}{Player~$1$\xspace}
\newcommand{\PT}{Player~$2$\xspace}
\newcommand{\PLi}{Player~$i$\xspace}

\newcommand{\blue}{{\color{blue} Blue}\xspace}
\newcommand{\orange}{{\color{orange} Orange}\xspace}
\newcommand{\red}{{\color{red} Red}\xspace}

\newtheorem{lemma}[theorem]{Lemma}
\newtheorem{claim}[theorem]{Claim}

\newtheorem{corollary}[theorem]{Corollary}

\begin{document}

\title{Analyzing the Interaction of Optimal Strategies in Mean-Payoff Bidding Games}

%%=============================================================%%
%% GivenName	-> \fnm{Joergen W.}
%% Particle	-> \spfx{van der} -> surname prefix
%% FamilyName	-> \sur{Ploeg}
%% Suffix	-> \sfx{IV}
%% \author*[1,2]{\fnm{Joergen W.} \spfx{van der} \sur{Ploeg} 
%%  \sfx{IV}}\email{iauthor@gmail.com}
%%=============================================================%%

\author[1]{\fnm{Shaull} \sur{Almagor}}\email{shaull@technion.ac.il}

\author[2]{\fnm{Guy} \sur{Avni}}\email{gavni@cs.haifa.ac.il}
%\equalcont{These authors contributed equally to this work.}

\author[1]{\fnm{Julian} \sur{Ewaied}}\email{jolian.ewaied@gmail.com}
%\equalcont{These authors contributed equally to this work.}

\affil[1]{\orgdiv{Department of Computer Science}, \orgname{Technion}}%, \orgaddress{\street{Street}, \city{City}, \postcode{100190}, \state{State}, \country{Country}}}

\affil[2]{\orgdiv{Department of Computer Science}, \orgname{University of Haifa}}%, \orgaddress{\street{Street}, \city{City}, \postcode{10587}, \state{State}, \country{Country}}}

%\affil[3]{\orgdiv{Department}, \orgname{Organization}, \orgaddress{\street{Street}, \city{City}, \postcode{610101}, \state{State}, \country{Country}}}

%%==================================%%
%% Sample for unstructured abstract %%
%%==================================%%

\abstract{A common assumption when designing an agent in a multi-agent system is that the other agents behave adversarially. 
This allows a designer to obtain the strongest guarantees when they have no control over nor knowledge about the other agents' behavior.
However, when all agents are designed under this adversarial assumption, their actual interaction is not adversarial (e.g., when all players play defensively, no player actually attacks).
In such settings, we would like to know what behavior arises in the multi-agent system.
However, analyzing the interaction among agents is notoriously challenging, both mathematically and algorithmically.
In this paper, we provide such an analysis, focusing on {\em bidding games}, played by two agents on a graph as follows. A token is placed on a vertex, and in each turn an auction (bidding) determines which agent moves the token, thus generating an infinite path that determines the agents' utilities.
We consider {\em mean-payoff} objectives; each vertex is associated with a reward for each player, and the utility in an infinite play is the limit average of the rewards. 
We analyze the play that is generated when each agent follows a strategy that optimizes against an adversary, and consider the two known explicit constructions of optimal strategies. 
The technical challenge stems from the infinitely-many configurations of a bidding game and their complicated dynamics. 
We show that, under some restrictions, the generated play is ultimately periodic, and develop algorithms to compute the players' utilities in it.
}

\keywords{Bidding games, Richman games, Mean-payoff games} 

%%\pacs[JEL Classification]{D8, H51}

%%\pacs[MSC Classification]{35A01, 65L10, 65L12, 65L20, 65L70}

\maketitle
We consider settings in which a user designs an agent to operate on its behalf. 
For example, advertisers competing for online advertisement slots rely on agents to bid on their behalf~\cite{KN22} (this is in fact a necessity due to the high-frequency of trading in ad-allocation auctions). 
In such settings, users often seek agents with worst-case guarantees. 
This is obtained by modeling the competitors of an agent as adversarial, and the agent follows an optimal strategy in a zero-sum game. 
The setting in which an agent has no knowledge nor does it make assumptions on the competitors that it will play against is often called an ``uncoupled'' setting~\cite{HM03}. 
%We stress that the agents are designed with the purpose of maximizing individual utility with no assumptions on the behavior of their competitors, thus the work on developing agents whose interaction leads to favorable outcomes is not relevant, e.g., increased cooperation among agents~\cite{Ten04,KOC23} or dynamics that converges to equilibria~\cite{BM07}, does not apply to this setting. 

%\footnote{The setting in which there is no knowledge of the other players' utility is often called an ``uncoupled'' setting~\cite{HM03}.} 

We consider a game in which each user knows only their utility, thus all users design their agents to operate against adversarial competitors. 
However, the premise that the agents are in fact adversarial typically does not hold; for example, when all players play defensively, no player actually attacks. 
That is, while each agent provides a worst-case guarantee, these guarantees might be overly pessimistic when the agents turn out not to be purely adversarial. 
{\em Our goal in this paper is to compute the utility that the agents actually obtain in an interaction between agents that are designed to optimize against an adversary. }

%In this paper, we analyze the outcome of an interaction between agents, where each agent is designed under the assumption that it operates against an adversary.  

%\shtodo{Maybe it would be good to repeat (from the abstract) the observation that if all agents are designed this way, then the premise that they are adversarial typically does not hold}

We describe the model on which we investigate this question. {\em Games on graphs} constitute a fundamental model with applications in {\em reactive synthesis}~\cite{PR89} and reasoning about multi-agent systems~\cite{AHK02}, and a deep connection to foundations of logic~\cite{Rab69}. 
%An important application of two-player games on graphs is {\em reactive synthesis}~\cite{PR89}, where the goal is to design a correct by design controller that interacts with an adversarial environment. A winning strategy in the game is a controller implementation that guarantees the specification in any environment. 
%,  and they have deep connections to foundations of logic~\cite{Rab69}. 
A game on a graph proceeds by placing a token on a vertex and the players move it to {\em generate} an infinite path~({\em play}), which determines the utilities of the players. Traditional graph games are {\em turn-based}: the players alternate turns in moving the token. 
We consider two-player {\em bidding games}~\cite{LLPSU99,LLPU96}, which are a class of games on graphs 
%\shtodo{Can you replace ``multi-agent systems'' with something more specific here? Seems weird to send something to AAMAS and give MAS as an example.}
in which an auction (bidding) determines which player acts in each turn: both players are allocated initial budgets that sum up to $1$, and in each turn, they simultaneously submit bids that do not exceed their available budgets, the highest bidder moves the token, and pays his bid to the other player. We focus on {\em mean-payoff} objectives, which are a fundamental quantitative objective in graph games (e.g.,~\cite{ZP96}); each vertex is associated with a reward, and the goal is to maximize the long-run average rewards that are traversed. 
The following example illustrates the setup.

\begin{figure}[ht]
\centering
\begin{subfigure}{0.12\textwidth}
        \footnotesize
         \begin{tikzpicture}[scale=0.8,
          every node/.style={circle, draw, fill=gray!20, minimum size=10pt, inner sep=0pt, font=\bfseries},
          every path/.style={-stealth, thick}
        ]
        
        % Root node
        \node[align=center] (root) at (0.5,-0.5) {$v_0$};
        
        % Child nodes
        \node[align=center] (inner2) at (0.75, -1.5) {$\ell_3$};
        \node[align=center] (inner1) at (-0.25, -1.5) {$v_1$};
        
        % Leaf nodes
        \node[align=center] (leaf1) at (-0.75,-2.6) {$\ell_1$};
        \node[align=center] (leaf2) at (0.25,-2.5) {$\ell_2$};
        
        % Draw the edges
        \draw (root) -- (inner2); %orange
        \draw (root) -- (inner1); %blue
        \draw (inner1) -- (leaf1); %both
        \draw (inner1) -- (leaf2);

%        \draw[blue] (root) edge[out=-30,in=120] (inner2);
        \draw[black, dashed, opacity=0.4] (leaf1) .. controls +(-0.6,1.5) and +(-0.6,0.2) .. (root);
        \draw[black, dashed, opacity=0.4] (leaf2) -- (root);
        \draw[black, dashed, opacity=0.4] (inner2) .. controls +(0.1,0.6) .. (root);
\end{tikzpicture}
 \end{subfigure}%
  \hspace{0.75cm}
 \vline
\begin{subfigure}{0.18\textwidth}
        \footnotesize
         \begin{tikzpicture}[scale=1,
          every node/.style={circle, draw, fill=gray!20, minimum size=10pt, inner sep=0pt, font=\bfseries},
          every path/.style={-stealth, thick}
        ]
        
        % Root node
        \node[align=center] (root) at (0.5,-0.5) {$v_0$\\$\mathbf{ {\color{blue} 0}}$};
        
        % Child nodes
        \node[align=center] (inner2) at (0.75, -1.5) {$\ell_3$\\$\mathbf{{\color{blue} 2}}$};
        \node[align=center] (inner1) at (-0.25, -1.5) {$v_1$\\$\mathbf{{\color{blue} 0}}$};
        
        % Leaf nodes
        \node[align=center] (leaf1) at (-0.75,-2.6) {$\ell_1$\\$\mathbf{{\color{blue} 1}}$};
        \node[align=center] (leaf2) at (0.25,-2.5) {$\ell_2$\\$\mathbf{{\color{blue} -1}}$};
        
        % Draw the edges
        \draw (root) -- (inner2); %orange
        \draw (root) -- (inner1); %blue
        \draw (inner1) -- (leaf1); %both
        \draw (inner1) -- (leaf2);

%        \draw[blue] (root) edge[out=-30,in=120] (inner2);
\draw[blue] (root) edge[out=-30,in=40, "$\frac{9}{8} \cdot \gamma(e_1)$" {pos=0.2, draw=none, fill=none}] (inner2);
        \draw[blue] (inner1) edge[out=-150,in=100, "$\frac{5}{4} \cdot \gamma(e_1)$" {draw=none, left, pos=0.1, fill=none}] (leaf1);
%        \draw[black, dashed, opacity=0.4] (leaf1) .. controls +(-0.6,1.5) and +(-0.6,0.2) .. (root);
 %       \draw[black, dashed, opacity=0.4] (leaf2) -- (root);
  %      \draw[black, dashed, opacity=0.4] (inner2) .. controls +(0.1,1.2) .. (root);
\end{tikzpicture}
 \end{subfigure}%
  \hspace{1cm}
  \vline
 \begin{subfigure}{0.14\textwidth}
         \footnotesize
         \begin{tikzpicture}[scale=1,
          every node/.style={circle, draw, fill=gray!20, minimum size=10pt, inner sep=0pt, font=\bfseries},
          every path/.style={-stealth, thick}
        ]
        
        % Root node
        \node[align=center] (root) at (0.5,-0.5) {$v_0$\\$\mathbf{ {\color{orange} 0}}$};
        
        % Child nodes
        \node[align=center] (inner2) at (0.75, -1.5) {$\ell_3$\\$\mathbf{{\color{orange} 0}}$};
        \node[align=center] (inner1) at (-0.25, -1.5) {$v_1$\\$\mathbf{{\color{orange} 0}}$};
        
        % Leaf nodes
        \node[align=center] (leaf1) at (-0.75,-2.6) {$\ell_1$\\$\mathbf{{\color{orange} 2}}$};
        \node[align=center] (leaf2) at (0.25,-2.5) {$\ell_2$\\$\mathbf{{\color{orange} -1}}$};
        
        % Draw the edges
        \draw (root) -- (inner2); %orange
        \draw (root) -- (inner1); %blue
        \draw (inner1) -- (leaf1); %both
        \draw (inner1) -- (leaf2);

%        \draw[blue] (root) edge[out=-30,in=120] (inner2);
        \draw[orange] (inner1) edge[out=-140,in=80, "$\frac{1}{8} \cdot \gamma(e_2)$" {draw=none, fill=none, pos=0.1, left}] (leaf1);
        \draw[orange] (root) edge[out=-150,in=60, "$\frac{1}{4} \cdot \gamma(e_2)$" {draw=none, fill=none, pos=0.5, left}] (inner1);
%        \draw[black, dashed, opacity=0.4] (leaf1) .. controls +(-0.6,1.5) and +(-0.6,0.2) .. (root);
 %       \draw[black, dashed, opacity=0.4] (leaf2) -- (root);
  %      \draw[black, dashed, opacity=0.4] (inner2) .. controls +(0.1,1.2) .. (root);
\end{tikzpicture}
 \end{subfigure}%
   \hspace{0.75cm}
 \begin{subfigure}{0.2\textwidth}
         \footnotesize
         \begin{tikzpicture}[scale=1,
          every node/.style={circle, draw, fill=gray!20, minimum size=10pt, inner sep=0pt, font=\bfseries},
          every path/.style={-stealth, thick}
        ]
        
        % Root node
        \node[align=center] (root) at (0.5,-0.5) {$v_0$\\$\mathbf{ {\color{red} 0}}$};
        
        % Child nodes
        \node[align=center] (inner2) at (0.75, -1.5) {$\ell_3$\\$\mathbf{{\color{red} -2}}$};
        \node[align=center] (inner1) at (-0.25, -1.5) {$v_1$\\$\mathbf{{\color{red} 0}}$};
        
        % Leaf nodes
        \node[align=center] (leaf1) at (-0.75,-2.6) {$\ell_1$\\$\mathbf{{\color{red} 2}}$};
        \node[align=center] (leaf2) at (0.25,-2.5) {$\ell_2$\\$\mathbf{{\color{red} 4}}$};
        
        % Draw the edges
        \draw (root) -- (inner2); %orange
        \draw (root) -- (inner1); %blue
        \draw (inner1) -- (leaf1); %both
        \draw (inner1) -- (leaf2);

%        \draw[blue] (root) edge[out=-30,in=120] (inner2);
        \draw[red] (inner1) edge[out=-20,in=80, "$1 \cdot \gamma(e_2)$" {draw=none, fill=none, pos=0.8, right}] (leaf2);
        \draw[red] (root) edge[out=-150,in=60, "$3 \cdot \gamma(e_2)$" {draw=none, fill=none, pos=0.5, left}] (inner1);
 %       \draw[black, dashed, opacity=0.4] (leaf1) .. controls +(-0.6,1.5) and +(-0.6,0.2) .. (root);
   %     \draw[black, dashed, opacity=0.4] (leaf2) -- (root);
     %   \draw[black, dashed, opacity=0.4] (inner2) .. controls +(0.1,1.2) .. (root);
\end{tikzpicture}
 \end{subfigure}%
 \caption{Left: an arena. Middle to right: respectively, the rewards and block strategy of \blue, \orange, and \red.} 
\label{fig:example}
\end{figure}    

%The following example illustrates the setup.
\begin{example}
\label{ex:intro}
Consider a game played on the arena (graph) depicted in Fig.~\ref{fig:example} Left.
The Middle figure, depicts the reward that user \blue obtains in each vertex. \blue designs their agent with no assumptions on the competitors that it will encounter; \blue constructs a zero-sum game in which the competitors are adversarial, and applies an off-the-shelf algorithm~\cite{AHC19} to construct an optimal strategy. The strategy is depicted in Fig.~\ref{fig:example} Middle: each vertex is associated with a bid written beside the vertex, and a \blue arrow depicts the choice of successor upon winning the bidding (e.g., if \blue wins the bidding at $v_1$, he proceeds to $\ell_1$). The construction in~\cite{AHC19} defines bids in a specific form: a vertex-dependent constant multiplied by a {\em normalization} (depicted $\gamma(\cdot)$, in Fig.~\ref{fig:example}), which depends on the history of rewards observed as we detail in Sec.~\ref{sec:block}. 

\blue can be paired with two possible competitors, \orange and \red, who differ in the rewards that they obtain in the vertices. Both users design their agents with the same adversarial assumption that \blue applies; they construct a zero-sum game and apply an off-the-shelf algorithm to find an optimal strategy. Both the rewards and the strategies that they construct are depicted in Fig.~\ref{fig:example} Right.

We illustrate how two strategies generate a play. Suppose first that \blue is paired with \orange. For simplicity, assume $a = \gamma(e_1) = \gamma(e_2)$. 
At $v_0$, the bids are $b_\blue = 9/8 \cdot a > 1/4 \cdot a = b_\orange$, thus \blue wins the bidding, pays $b_\blue$ to \orange, and moves the token to $\ell_3$. \blue obtains a reward of $2$ and \orange a reward of $0$, and the game returns to $v_0$. Since the collected rewards differ, this time the normalizations in the bids might differ, leading to a possible different bidding outcome. 
%We point out that the generated play has an involved structure. Indeed, the choice of normalization is such that it decreases as the energy increases, thus eventually, \orange will win the bidding at $v_0$ and choose to proceed to $v_1$. 
Suppose now that \blue is paired with \red. At $v_0$, the bids are $b_\blue = 9/8 \cdot a<  3\cdot a = b_\red$. Now \red wins, and moves the token to $v_1$. At $v_1$, the bids are $b_\blue = 5/4 \cdot a > 1 \cdot a = b_\orange$, and \blue moves the token to $\ell_1$. %We point out that an agent cannot win all biddings since this will exhaust their budget. This is where the normalization comes into the picture; roughly, as the energy increases, the normalization and the bids decrease. 

%Indeed, \red's strategy is a possible strategy of \blue's competitor, thus the composition of \blue and \red generates a play in \blue's zero-sum game, and \blue's guarantee applies to any such play, and similarly for the other agents. 
Fig.~\ref{fig:plots} depicts the accumulated rewards for each of the players in the two generated plays. 
The primary question that we study is the regularity of the generated play. The plots depict a non-trivial structure of the generated play, and hints at the technical difficulty of this problem. A careful inspection of the plots reveals a regular behavior. We prove formally that this is a general phenomenon: the generated play is ultimately periodic in all {\em recurrent} strongly-connected games (see Thm.~\ref{thm:budget-lasso}). 
In terms of utility, while the strategies provide the same modest worst-case guarantees,  a non-negative mean-payoff reward, the actual payoffs are way higher: the utility (mean-payoff reward) when \blue plays against \orange are both $0.5$ and when \blue plays against \red they are $0.411$ and $0.47$. The higher \blue utility in the former could stem from the agreement between \blue and \orange at $v_1$ whereas \blue disagrees with \red everywhere. %We stress that these agreements are not known when the strategies are constructed and the worst-case guarantees apply to any competitor. 
\begin{figure}[ht]
\centering
\includegraphics[width=4.2cm]{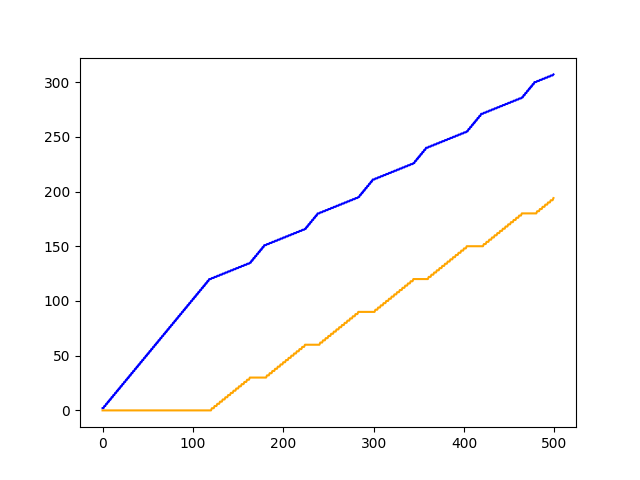}
\includegraphics[width=4.2cm]{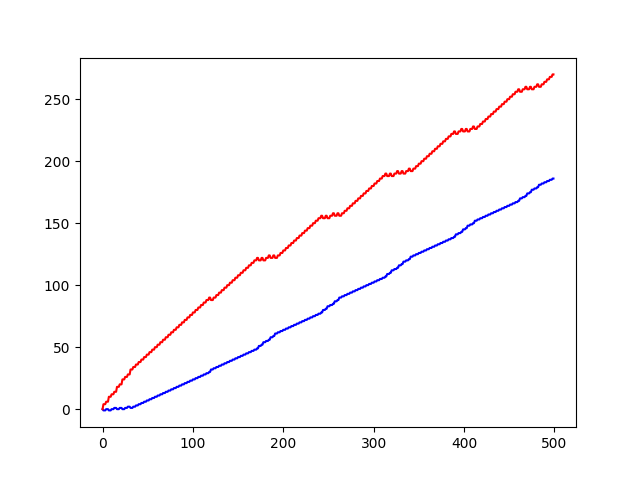}
\caption{Left: \blue vs \orange, Right: \blue vs \red, Y-axis: accumulated rewards (energy), X-axis: time}
\label{fig:plots}
\end{figure}
\end{example}

%In particular, with mean-payoff objectives (on strongly connected graphs) we are able to produce an output strategy even when $\Th_1(v_0) + \Th_2(v_0) \geq 1$ (as opposed to the qualitative case).
%%%%%%%%%%%%%%%%%%%%%%%%%%

\noindent{\bf Motivation.}
We describe concrete applications. 

{\bf Fair allocation of resources} is a timely topic (e.g.,~\cite{ABFV22,ALMW22}). The goal is to allocate a collection of items $\set{i_1,\ldots, i_k}$ to agents in a {\em fair} manner. 
Bidding games have been applied as a natural and useful mechanism for fair allocation~\cite{MKT18,BEF21}: each agent is allocated an initial {\em scrip} budget, i.e., a budget that is used for the purpose of the mechanism and carries no utility for the agent, and the resources are auctioned sequentially. %, i.e., on the $j$-th turn, Item~$i_j$ is auctioned to the agents, for $1 \leq j \leq k$.\shtodo{remove ``$for 1\le j\le k$'', I think. Maybe even everything after ``sequentially''.} 
It is common practice to construct agents that provide worst-case guarantees on the fraction of the utility proceeds by solving a zero-sum game (e.g.,~\cite{GBI21}). 

Our work gives rise to a mechanism for repeated allocation of resources, as the following example illustrates. 
A librarian needs to decide, each week, which newspaper it purchases, either New York Times (NYT) or Wall Street Journal (WSJ). The librarian applies the mechanism depicted in Fig.~\ref{fig:example} Left, in which two students participate: a first bidding at $v_0$ determines whether NYT is purchased ($\ell_3$) and otherwise, a second bidding at $v_1$ determines whether WSJ is purchased ($\ell_1$) or whether no newspaper is purchased ($\ell_2$). Student \blue likes both NYT and WSJ with a preference for the former, Student \orange only likes WSJ, and both students suffer from if no newspaper is purchased. 
Analyzing the behavior of the game allows the librarian to allocate the correct budget for newspapers, rather than use estimates based on the worst-case guarantees of the two players.

Similar mechanisms can be applied for fairly allocating daily computing time to users (e.g., on a GPU) or fairly allocating an advertisement slot among two advertisers as we elaborate below.

{\bf Auction-based scheduling: } Consider the problem of finding a plan (a path) in a graph that models an environment for a conjunction of two objectives. For example, consider the task of designing a plan for a patrolling robot that needs to maximize the time it spends in two locations $t_1$ and $t_2$. 
A {\em decoupled} approach to planning~\cite{AH+26} independently constructs a policy for each objective and composes the policies at runtime. Advantages of the approach include enabling parallel computing and modularity, e.g., if $t_1$ changes, only $\sigma_1$ needs to be updated and $\sigma_2$ can stay fixed. Runtime composition is challenging due to the need to resolve conflicts. {\em Auction-based scheduling} (ABS)~\cite{AMS24} applies an auction in each turn to determine which policy chooses the next action. 

%This decoupled approach has several benefits: (1)~it is parallelizable; the independent construction means that each strategy can be constructed on a separate CPU, (2)~it is modular; e.g., if $t_1$ changes, only $\sigma_1$ needs to be updated and $\sigma_2$ can stay fixed, and (3)~it enables an iterative design procedure: suppose that initially, only $t_1$ is known for which $\sigma_1$ is found, then afterwards $t_2$ is added for which $\sigma_2$ is found and composed with $\sigma_1$ at runtime. 
%Recently, a \emph{decoupled} approach for planning, called {\em auction-based scheduling} (ABS), was developed by~\cite{AMS24}. 
ABS proceeds as follows. 
For each target $t_i$, construct a bidding game $\G_i$ on the input graph, where in each turn, the players bid for who moves the robot. The game $\G_i$ is a zero-sum game in which the proponent aims to spend as much time at $t_i$. This is naturally expressed as a mean-payoff objective. 
Solve each game to obtain optimal strategies $\sigma_1,\sigma_2$ for the robot. Each $\sigma_i$ is accompanied by a guarantee $c_i$ on the time spent at $t_i$, for $i \in \set{1,2}$. 
Then, a plan for the robot is obtained by composing the two strategies by letting them play against one another. It is not hard to see that the worst-case guarantees apply, namely time that the robot spends at $t_i$ is at least $c_i$. 
Crucially, the guarantees of the strategies may be overly pessimistic. In this work, we seek to compute the actual time the robot spends in the targets.

Finally, we point out that~\cite{AH+26,AMS24} only consider qualitative objectives. Our work is the first to consider quantitative objectives, which is particularly appealing since it allows to quantify and weigh the (possible) drop in performance with the advantages of decoupling. 

%Both can be naturally specifies as a mean-payoff objective. 

{\bf Advantages over Nash equilibrium (NE)}. The standard solution concept for non-zero-sum games is NE. We point to advantages of the solution that we consider. 
First, unlike NE, where each agent knows the objectives of the competitor as well as assumes that they rationally aim to maximize it, here, agents assume neither: they have no knowledge on the objective of their competitors and make no assumptions on their behavior.
Second, a shortcoming of NE is that it is not clear how agents end up playing an NE. One either needs to assume player dynamics, which does not always converge to NE, or assume a centralized authority that can prescribe strategies to agents, which is arguably a strong assumption, not to mention that finding an NE is computationally intractable~\cite{DGP06}. We find it a feature of our solution concept that it neither relies on dynamics nor requires a centralized authority.

\smallskip
\noindent{\bf Our results.}
We consider two-player mean-payoff bidding games played on strongly-connected graphs. The game is not zero sum, but each agent only knows their individual objective. Each agent assumes the competitor is adversarial, and follows an optimal strategy. It is known that $\epsilon$-optimal pure strategies exist in mean-payoff bidding games. Moreover, there are two known explicit constructions of $\epsilon$-optimal strategies: the {\em block strategy} $\block$~\cite{AHC19} and the {\em budget strategy} $\budget$~\cite{AJZ21}. 

The first question that we study regards the regularity of the path that is generated when two block strategies or two budget strategies play against each other. 
This is highly non-trivial since a bidding game contains infinitely-many (in fact, uncountably-many) configurations, and the strategies are described succinctly; $\block$ chooses bids depending on the accumulated weights and $\budget$ chooses bids depending on the current budget. 
One can argue that establishing regularity of the generated path is a prerequisite result; 
it is hard to imagine, e.g., an algorithm to predict the payoffs of the generated play, if the play is not ultimately periodic. 

%In practice, an immediate approach to assess the performance of a generated play is to simply simulate the strategies. However, without our ultimately-periodic result, the designer could not draw any conclusions from such a simulation since the payoffs could very well fluctuate arbitrarily. Our result implies that only a finite simulation is required since every simulation eventually stabilizes.

%:  and on a (possibly irrational) number in $[0,1]$. 

{\bf Recurrent games}. {\em Repeated games} are simple, common, and important; e.g., repeated prisoner's dilemma is extensively studied (e.g.,~\cite{KM+82}) and a repeated mean-payoff game is studied in the seminal book~\cite{AMS95}. In a {\em repeated bidding game}, winning a bidding entails a reward and losing a bidding entails a penalty, and the goal is to maximize the mean-payoff reward. For example, suppose that two advertisers participate in a daily bidding to determine which of the two ads displays that day, each advertiser is rewarded $1$ whenever they win a bidding and $0$ otherwise, then an advertiser's mean-payoff reward represents the long-run ratio of the biddings won, which can also be thought of as their ``visibility'', e.g., the number of days in a year that an ad is displayed. 
{\em Recurrent games} generalize repeated games. They are played on a tree in which the leaves point to the root (see Fig.~\ref{fig:example} Left). A repeated bidding game is the special case of a tree of height $1$. Recurrent games arise naturally, as illustrated in the mechanism for repeated fair resource allocation above. Moreover, recurrent bidding games served as a useful stepping stone towards solutions to general games; e.g., this was the case in~\cite{AJZ21}.

For block strategies, we show that the path $\pblock$ generated by two block strategies is ultimately periodic. Observe that the plots depicted in Fig.~\ref{fig:plots} do indeed hint that the paths are ultimately periodic, and it is interesting to note that the period, particularly in the \blue vs \red plot, is highly non-trivial. 
Our proof gives rise to an algorithm to compute the payoffs of $\pblock$. 
%Another implication of our proof is that, interestingly, under mild assumptions, the utilities in the generated play coincide. For example, in Ex.~\ref{ex:intro}, we have $\MP_1(\pblock) = \MP_2(\pblock) = 0.5$. This constitutes additional motivation for using bidding as a mechanism for ongoing fair allocation of resources. 
For budget strategies, we devise a sound algorithm to reason about the generated path in recurrent games and show that in repeated bidding games,  the generated path is ultimately periodic. 

%The bowtie game is a fundamental game; a recurrent game with a root and two leaves that point to it. Practically, it corresponds to a repeated stateless auction, where each player simply aims to maximize the ratio of the biddings won. 
\smallskip
Our conceptual contribution --- analyzing the interaction between optimal strategies --- entails a careful mathematical analysis. Moreover, it is (crucially) tailored to the structure of the optimal strategies being used. Thus, the paper must first recall some technical aspects of these strategies. We defer most of the technicalities to the appendix, but still maintain analysis since it has merit in its own right; both the results themselves and illustrating the techniques, which are both novel to bidding games and will are challenging to extend beyond the settings that we consider.

\noindent{\bf Related work.}
To the best of our knowledge, analyzing the outcome of two optimal strategies has not been studied before. A conceptually similar line of work, which has recently attracted considerable attention, analyzes the outcome of a repeated game, where players follow a fixed online-learning algorithm. 
For example,~\cite{BP18} shows that even though it is guaranteed that the time-average of the sequence of strategy profiles generated converges to an NE, surprisingly, the sequence itself does not converge to an NE, and in fact tends away from it. 
Similar to our eventually-periodic structural results,~\cite{MPP18} establishes ``recurrence'' in the sequence of strategy profiles. 
In both our and these works, the analysis is technically challenging since it entails analyzing a dynamical system; in our case, a piece-wise linear dynamical system.

%%%%%%%%%%%%%%%%%%%%%%%%%%%%%%%%%%%%%%%%%%%%%%%%%%%%%%%%%%%%%%%%%%%%%%%%

\section{Preliminaries}

A \emph{mean-payoff bidding game} is a tuple $\G = \zug{V, E, w_1,w_2}$, where $V$ is a set of vertices, $E \subseteq V \times V$ is a set of edges, for $i=1,2$, $w_i: V \rightarrow \Q$ is a weight function for \PLi. We restrict attention to strongly-connected games, i.e., $\zug{V, E}$ is a strongly-connected graph. The {\em neighbors} of $v \in V$ are $\neig(v) = \set{u: E(v,u)}$. 
A game is {\em zero-sum} when $w_1 = -w_2$, i.e., the reward of \PO is the penalty for \PT. In such games, we refer to \PO as Max, \PT as Min, and use only one weight function $w=w_1$. 

A \emph{configuration} of a bidding game is a tuple $\zug{v, B}$ meaning that the token is placed on $v \in V$ and \PO's budget is $B\in [0,1]$. We normalize the sum of budgets to $1$, thus implicitly \PT's budget is $1-B$. 
A \emph{strategy} for \PLi is a ``recipe'' that, given the history of the game, specifies an {\em action} for the player to take. Formally, a strategy for \PLi is a function $\sigma_i: (V\times [0,1])^+ \rightarrow V \times [0,1]$, which given a history of configurations, returns an action $\zug{u, b} \in (V \times [0,1])$ meaning that \PLi bids $b$ and moves the token to $u$ upon winning the bidding. 
We restrict to {\em legal} strategies that choose (1)~a bid that does not exceed the available budget and (2)~move the token to a neighboring vertex. 
%short: this only talks about P1:
%We restrict to {\em legal} strategies that at configuration $\zug{v, B}$, assign actions of the form $\zug{u, b}$ where (1)~the bid does not exceed the available budget, i.e., $b \leq B$ and (2)~$u$ is a neighbor of $v$, i.e., $u \in \neig(v)$. 
An initial configuration $c_0 \in (V \times [0,1])$ and two strategies $\sigma_1$ and $\sigma_2$ give rise to a unique {\em play}, denoted $\play(c_0, \sigma_1, \sigma_2)$, which is an infinite sequence of configurations that is defined inductively as follows. 
For $n \in \Nat$, denote the \emph{$n$-th prefix} of an infinite play $\pi = c_0, c_1,\ldots$ by $\prefN = c_0,\ldots,c_n$. 
The first configuration of $\play(c_0, \sigma_1, \sigma_2)$ is $c_0$. Suppose that $\pref{j}$ is defined and ends in configuration $c_j = \zug{v_j, B_j}$. We define the next configuration $c_{j+1}$ as follows. For $i\in \set{1,2}$, let $\zug{u_i, b_i} = \sigma_i(c_0, \ldots, c_j)$ be \PLi's next action. If $b_1 \geq b_2$, \PO wins the bidding 
and the next configuration is $c_{j+1} = \zug{u_1, B_j-b_1}$, and otherwise \PT wins the bidding and $c_{j+1} = \zug{u_2, B_j + b_2}$. Note that we break bidding ties in favor of \PO, and our analysis can easily be adapted to accommodate more involved tie-breaking mechanisms such as alternating turns, but the issue of tie breaking is orthogonal to the questions that we study.
%This  choice is insignificant: the worst-case guarantees of the strategies apply to any tie-breaking mechanism and the effect of the mechanism on our analysis is very minor. 
The path in $\G$ that \emph{corresponds} to $\play(c_0, \sigma_1, \sigma_2) = \zug{v_0, B_0}, \zug{v_1, B_1}, \ldots$ is denoted $\path(c_0, \sigma_1, \sigma_2) = v_0, v_1, \ldots$. 
%\shtodo{Some keyword needs to be defined here (either \emph{corresponds}, or define something like \emph{induced path})}
%\gutodo{I forgot the notation I wanted to introduce. I added it now. Is this what you meant?}
%\shtodo{Great.}
%We omit $\G$ and $c_0$ when they are clear from the context.

\begin{definition}\label{def:MP}{\bf (Mean-payoff and energy).}
Let $\prefN = c_0,\ldots,c_n$ with $c_j = \zug{v_j, B_j}$, for $n \in \Nat$ and $0 \leq j \leq n$. %short We associate a pair of {\em energies} with $\prefN$: 
For $i \in \set{1,2}$, define \PLi's energy in $\prefN$ to be $\energy_i(c_0,\ldots, c_n) = \sum_{0 \leq j < n} w_i(v_j)$. \PLi's {\em payoff} in $\pi$ is $\MP_i(\pi) = \liminf\limits_{n \rightarrow \infty} \frac{1}{n}\energy_i(\prefN)$.
%We associate two payoffs with $\pi$. For $i \in \set{1,2}$, define $\MP_i(\pi) = \liminf_{n \rightarrow \infty} \frac{\energy_i(\prefN)}{n}$. 
When $\G$ is zero-sum, we write $\MP(\pi)$ for Max's reward. 
\end{definition}

\paragraph*{The mean-payoff value in zero-sum games}
The guarantees that the strategies provide arise from results on zero-sum mean-payoff bidding games, which we briefly survey next. 
Consider a zero-sum game $\G$. The {\em mean-payoff value} of $\G$, denoted $\val(\G)$, is intuitively the optimal reward that Max can ensure. Quite surprisingly, it was shown in \cite{AHC19} that the $\val(\G)$ does not depend on the initial budgets and only on the structure of the game. 

%; namely, for every $\epsilon > 0$, roughly, the payoff that Max can ensure with a budget of $\epsilon$ and $1-\epsilon$ is the same. Moreover, the value is characterized as follows. The {\em random-turn game} that corresponds to a game $\G$ and a probability $p$, denoted $\RT(\G, p)$, is a game in which in each turn, instead of bidding, \PO moves with probability $p$, and \PT moves otherwise. Formally, $\RT(\G, p)$ is a stochastic game~\cite{Con92}. Such games are known to have a {\em value} (see~\cite{Put05}), denoted $\MP\big(\RT(\G, p)\big)$, which is intuitively the expected long-run average of the weights under optimal play of the players. When $p = \frac{1}{2}$, we omit it and write $\RT(\G)$.

\begin{theorem}\label{thm:zero-sum-MP}
\cite{AHC19}
Consider a strongly-connected zero-sum mean-payoff bidding game $\G=\zug{V, E, w}$, an initial vertex $v_0$, and $\epsilon,\epsilon' > 0$. There exists a value $\val(\G)$ such that 
\begin{itemize}
\item Max can guarantee payoff $\val(\G)-\epsilon$ with any positive budget. Formally, 
Max has a strategy $\sigma$ s.t. for any Min strategy $\tau$, ensures $\MP(\play(\zug{v_0, \epsilon'}, \sigma, \tau)) \geq \val(\G) - \epsilon$. 
\item Max cannot do better even with (almost) all the budget. Formally, there is a Min strategy $\tau$ such that for every Max strategy $\sigma$, we have $\MP(\play(\zug{v_0, 1-\epsilon'}, \sigma, \tau)) \leq \val(\G) + \epsilon$. 
\end{itemize}
\end{theorem}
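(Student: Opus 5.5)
We follow the approach of \cite{AHC19}: reduce the bidding game to a random-turn game, and then turn optimal strategies of the random-turn game into bidding strategies. Let $\RT(\G)$ be the stochastic game in which, at each turn, a fair coin decides whether Max or Min moves the token. Since $\G$ is strongly connected, $\RT(\G)$ is a strongly-connected stochastic mean-payoff game. Such games have a value that is independent of the initial vertex, and both players have optimal positional strategies. We set $\val(\G)$ to be this value.

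\emph{Potentials.} Take the potential (bias) function $\Pot: V\to\Q$ solving the optimality equations
\[
\val(\G) + \Pot(v) = w(v) + \tfrac12\big(\max_{u\in\neig(v)}\Pot(u) + \min_{u\in\neig(v)}\Pot(u)\big).
\]
Define $\St(v) = \max_{u\in\neig(v)}\Pot(u) - \min_{u\in\neig(v)}\Pot(u)$. Max's optimal move is $v^+=\arg\max$ and Min's is $v^-=\arg\min$. For the first item, consider the shifted weights $w' = w - \val(\G) + \epsilon$. It suffices to show that Max can keep the energy under $w'$ bounded from below, since then the liminf average under $w$ is at least $\val(\G)-\epsilon$.

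\emph{Max's strategy.} Max plays in blocks. At vertex $v$ he bids $\St(v)\cdot\gamma$, where $\gamma$ is a normalization factor that is fixed during a block. Upon winning he moves to $v^+$.

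\emph{Bookkeeping.} Consider the quantity $\Pot(v_n) + \energy'(\pi^{\le n})$. When Max wins, it changes by at least $\St(v)/2$ plus the excess $\epsilon$, and Max pays $\gamma\St(v)$. When Min wins, it drops by at most $\St(v)/2$, but Min paid more than $\gamma\St(v)$ to Max. Hence the budget change and the potential-energy change balance, and we obtain an invariant of the form
\[
\text{Max's budget} \;\ge\; \gamma\cdot\big(\text{const} - \text{(energy deficit)}\big).
\]
So a large drop in energy would force Max's budget to be large.

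\emph{Ensuring positive payoff with tiny budget.} Because the budget is bounded by $1$, the energy deficit within a block is bounded. The $+\epsilon$ slack makes the energy grow linearly when Min does not pay, and this growth is what lets Max lower $\gamma$ over time. The block structure works as follows. Within a block, Max uses a fixed $\gamma$ sized relative to his current budget, with $\gamma$ proportional to the budget divided by the maximal possible loss in a block. A block ends either when the energy has increased by a threshold $M$, in which case Max ``banks'' it, or when the energy has decreased by $M$. In the latter case Max's budget has grown by a constant factor, so he restarts with a larger $\gamma$. The budget can multiply only finitely many times before exceeding $1$. Therefore, eventually every block ends with an energy increase, or Min keeps paying forever. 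Paying forever is impossible with a bounded budget unless the bids tend to zero, and this is handled by choosing $\gamma$ to decrease as the accumulated energy increases. This gives liminf average at least $\val(\G)-\epsilon$ from any initial budget $\epsilon'>0$.

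\emph{Second item.} This follows by symmetry. Apply the same construction to Min with weights $-w$; the random-turn value of that game is $-\val(\G)$. Min then guarantees payoff at most $\val(\G)+\epsilon$ even when her budget is $\epsilon'$, i.e., when Max's budget is $1-\epsilon'$.

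\emph{Main obstacle.} The delicate part is the normalization scheme. It must let a vanishing budget suffice while preventing Min from winning infinitely many cheap biddings that erode the mean payoff. This needs a careful potential argument showing that each ``bad'' block multiplies Max's budget by a fixed factor $>1$, while ``good'' blocks, each carrying an $\epsilon$ surplus per step, dominate the long-run average. Tie-breaking and the case $\St(v)=0$ (where the bidding does not matter) are routine.
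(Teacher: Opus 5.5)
The paper does not prove this theorem; it imports it from \cite{AHC19}, whose proof is the block strategy recalled in Sec.~3.1. Your opening matches that proof: the value of $\RT(\G)$, the potentials, and the accounting showing that at a \emph{fixed} normalization $\gamma$, Max's budget plus $2\gamma\cdot$(potential $+$ energy) never decreases. Your displayed invariant has the sign backwards, though: an energy deficit \emph{raises} the lower bound on Max's budget, as your next sentence correctly says.

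The gap is the part that carries the theorem: why a budget of $\epsilon'$ suffices when $\gamma$ has to change. The argument you give for it fails. The inference ``the budget can multiply only finitely many times before exceeding $1$, therefore eventually every block ends with an energy increase'' ignores that good blocks also shrink Max's budget. He pays for the wins that raise the energy, and with $\gamma$ proportional to his budget a good block costs a constant fraction of it. So bad blocks can recur forever between good ones, and nothing bounds the number of bad blocks minus the number of good blocks, which is what keeps the energy bounded below.

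This is a real failure, not a presentational one. Min can make every trade essentially exact: she bids $0$ when Max should move, and just above $\gamma\St(v)$ otherwise. Suppose $\gamma$ is only calibrated so that Max can afford a whole block, as you propose. Then already in a repeated game with leaf weights $\pm1$, a good block followed by a bad block multiplies Max's budget by a factor below $1$, despite the $\epsilon$-shift. After many such pairs his budget $B$ is tiny, and Min can then force about $\log(1/B)$ consecutive bad blocks. Repeating this cycle drives the payoff below $\val(\G)-\epsilon$.

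The missing lemma must use $\epsilon$ quantitatively, and it concerns the \emph{budget cost} of good blocks, not, as your last paragraph suggests, their share of the long-run average. During a good block the zero-value energy rises only by about $M(1-\epsilon/W)$, where $W$ is the largest absolute weight. So choose $\gamma M$ to be an $O(\epsilon/W)$ fraction of the budget, and $M$ large relative to the potential spread. Then a good block followed by a bad one nets a budget factor strictly above $1$, which bounds the excess of bad over good blocks by a constant depending on $\epsilon'$.

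Your closing sentence instead switches to a normalization that decreases with the accumulated energy. That is inconsistent with the budget-proportional $\gamma$ you set up, but it is what \cite{AHC19} actually does: $\gamma=z^{-\beta}$, with $\beta$ the current energy block. That route has its own crux, which is also absent from your proposal. Near a block boundary, Max can buy energy at one normalization and be repaid at one smaller by a factor $z$, so oscillation across the boundary leaks budget. \cite{AHC19} pays for this with the $\epsilon$-slack in two ways:
\begin{itemize}
\item it takes strengths from a game $\G^z$ in which negative weights are multiplied by $z$, with $z>1$ chosen so that its random-turn value stays non-negative;
\item it maintains the invariant that Max's budget exceeds a function of the virtual energy that is at least $1$ at energy $0$ and tends to $0$. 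Hence energy $0$ is never reached, and any $\epsilon'>0$ suffices once the virtual initial energy $e_0$ is large.
\end{itemize}
Two smaller points. The case ``Min keeps paying forever'' needs no treatment in your scheme, since a block that never ends keeps the energy within $M$ of its start. Your symmetry argument for the second item is fine once the first is established.
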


A corollary of Thm.~\ref{thm:zero-sum-MP} is that two optimal strategies are composable in the following sense. Consider a game $\G = \zug{V, E, w_1, w_2}$, let $\G_i = \zug{V, E, w_i}$, for $i \in \set{1,2}$, be two zero-sum mean-payoff games, and $\sigma_i$ be an $\epsilon$-optimal strategy for Max in $\G_i$. Then, since $\sigma_i$ requires only a positive initial budget to guarantee $\val(\G_i)-\epsilon$, we can compose $\sigma_1$ with $\sigma_2$ by allocating, for example, a budget of $0.5$ to each. Trivially, for $\pi = \play(\zug{v_0, 0.5}, \sigma_1, \sigma_2)$, we have both $\MP_1(\pi) \geq \val(\G_1) - \epsilon$ and $\MP_2(\pi) \geq \val(\G_2)-\epsilon$.

We focus on games played on {\em recurrent graphs} (see Fig.~\ref{fig:example}), defined as follows. 
\begin{definition}
{\bf (Recurrent graphs).}
A {\em recurrent} graph $\tup{V,E}$ is a strongly-connected graph given as a tree with root $v_0$, in which all leaves point to $v_0$, thus all cycles traverse $v_0$.\footnote{This restriction can be lifted to DAGs by duplicating vertices that are shared between paths that lead to the same leaf.}
We call a path from root to leaf a \emph{vine} and a visit to $v_0$ a \emph{milestone}.  
For a play $\pi = c_0, c_1,\ldots$ let $\Mil = \set{n_1,n_2,\ldots}$ be the indices in which $\pi$ visits the root, i.e., for $n \in \Mil$, we have $c_n= \zug{v_0, B_n}$ for some $B_n$. For $i \geq 1$, the path $v_{n_i}\ldots v_{n_{i+1}-1}$ is a vine. 
\end{definition}

In the next two sections we describe two explicit constructions of $\epsilon$-optimal strategies in zero-sum games and analyze the play that arises from composing two strategies of the same type in a non-zero sum game.

%%%%%%%%%%%%%%%%%%%%%%%%%%%%%%%%%%%%%%%%%%%%%%%
\section{Analyzing the Play Generated by Two Block Strategies}
In this section, we describe the construction of {\em block strategies}~\cite{AHC19} and analyze the play generated by two such strategies. We show that in recurrent games, the path that the play forms is ultimately periodic.
 The proof gives rise to an algorithm that computes the payoff of the play. Finally, we identify a class of games for which we show that the play traverses at most two vines infinitely often.

\subsection{The block strategy}
\label{sec:block}
We describe the parts of the construction of a block strategy that are essential for this work. The full construction and its correctness proof is described in \cite[Sec.~4.2.1]{AHC19}.

The strategy chooses $e_0 \in \Nat$, ``pretends'' that the initial energy is $e_0$, and ensures that the energy never drops to $0$. It is not hard to see that this guarantees a non-negative mean-payoff. Note that in order to guarantee payoff $c$, one can simply follow a strategy in a game in which all weights are decreases by $c$. 

The strategy determines bids according to the accumulated rewards, defined as follows. Recall that $\Mil$ are the indices of the play that visit the root. For $n \in \Mil$, the {\em virtual energy} of $\prefN$, denoted $e(\prefN)$, can be thought of as the ``current energy''; it is the change in energy  added to the initial energy, $e(\prefN) = e_0 + \energy(\prefN)$.

\smallskip
\noindent{\bf Strategy structure.}
Suppose that the game is at vertex $v$ following prefix $\pref{m}$, and let $n \leq m$ with $n \in \Min$ be the last visit to the root. 
Then, the strategy chooses $\zug{b, v'}$, where $v'$ is a neighbor of $v$, and the bid $b$ is of the form $\St(v) \cdot \gamma\big(e(\prefN)\big)$, where $\St(v)$ is called the {\em strength} of $v$ and $\gamma(\cdot)$ is a {\em normalization scheme} that depends on the virtual energy. 
\smallskip

Importantly, the choices of $v'$ and $\St(v)$ are pre-computed and {\em do not} depend on the prefix of the play. The analysis in this paper only requires that they are constant. 
For completeness, we illustrate the definition in App.~\ref{app:strengths}.

\begin{figure}[ht]
\centering
\begin{subfigure}{0.2\textwidth}
        \footnotesize
         \begin{tikzpicture}[scale=1,
          every node/.style={circle, draw, fill=gray!20, minimum size=10pt, inner sep=0pt, font=\bfseries},
          every path/.style={-stealth, thick}
        ]
        
        % Root node
        \node[align=center] (root) at (0.5,-0.5) {$v_0$\\$\mathbf{{\color{black} 0}, {\color{red} 0}, {\color{cyan} 2}}$};
        
        % Child nodes
        \node[align=center] (inner2) at (1.5, -1.5) {$\ell_3$\\$\mathbf{{\color{black} 2}, {\color{red} 2}, {\color{cyan} 0}}$};
        \node[align=center] (inner1) at (-0.5, -1.5) {$v_1$\\$\mathbf{{\color{black} 0}, {\color{red} -2}, {\color{cyan} 3}}$};
        
        % Leaf nodes
        \node[align=center] (leaf1) at (-1.3,-2.6) {$\ell_1$\\$\mathbf{{\color{black} 1}, {\color{red} 1}, {\color{cyan} 0}}$};
        \node[align=center] (leaf2) at (0.6,-2.5) {$\ell_2$\\$\mathbf{{\color{black} -5},\!\! {\color{red} -5}, {\color{cyan} 0}}$};
        
        % Draw the edges
        \draw (root) -- (inner2); %orange
        \draw (root) -- (inner1); %blue
        \draw (inner1) -- (leaf1); %both
        \draw (inner1) -- (leaf2);

        \draw[red] (root) edge[out=-30,in=120] (inner2);
%        \draw[orange] (root) edge[out=-150,in=60] (inner1);
        \draw[red] (inner1) edge[out=-115,in=40] (leaf1);
%        \draw[orange] (inner1) edge[out=-140,in=65] (leaf1);
        \draw[black, dashed, opacity=0.5] (leaf1) .. controls +(-0.6,1.5) and +(-0.6,0.2) .. (root);
        \draw[black, dashed, opacity=0.5] (leaf2) -- (root);
        \draw[black, dashed, opacity=0.5] (inner2) .. controls +(0.1,1.2) .. (root);
        
        \end{tikzpicture}
       \end{subfigure}
       \hspace{1cm}
\begin{subfigure}{0.2\textwidth}
\centering
        \includegraphics[height=3cm]{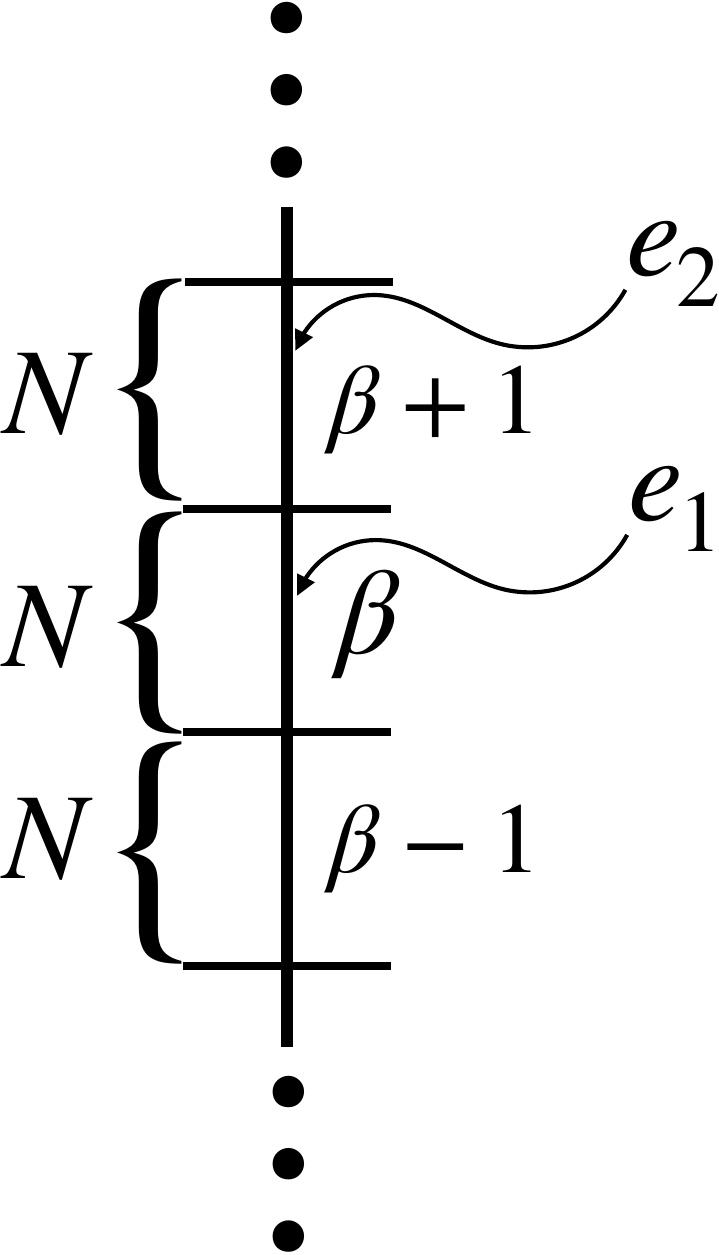}
\end{subfigure}
%    \caption{A recurrent graph.}
  %  \label{fig:example_graph}
  \caption{Left: The three numbers in a vertex $v$, left to right, are: its weight $w(v)$ (black), its potential $\Pot(v)$ ({\color{red} red}), and its strength $\St(v)$ ({\color{cyan} cyan}) (see App.~\ref{app:strengths} for details on the latter). Red edges depict \Max's choices upon winning a bidding; he proceeds to a neighbor with the maximal potential. 
Right: Energy blocks and virtual energies that belong to them.} 
\label{fig:block}
\end{figure}    

\paragraph*{The normalization scheme}
We describe the normalization scheme of the block strategy, which is key in our analysis.

The strategy chooses $N \in \Nat$ and $z>1$. The precise choice of the parameters $N$, $z$, and $e_0$ is insignificant for the analysis, we do not elaborate on it and refer the reader to~\cite{AHC19} for details. Our only assumption is that $N$ is larger than the sum of weights in any vine. 

We partition $\Q$ into {\em energy blocks} of size $N$ (see Fig.~\ref{fig:block}, Right), and the normalization factor is chosen based on the block to which the virtual energy resides in. Formally,  for $i \in \Nat$, the $i$-th energy block is $[Ni, N(i+1)) \cap \bbQ$. Note that the end points of the blocks are integers but the virtual energy can be rational. 
Denote by $\beta$ the energy block that $e(\prefN)$ belongs to, thus $\beta = \lfloor e(\prefN) / N \rfloor$. We call $\beta$ the {\em current energy block}. 

\smallskip
\noindent{\bf Normalization:} When the game is in vertex $v$ and the current energy block is $\beta$, the bid is $\St(v) \cdot z^{-\beta}$. Note that changes to the normalization factor occur only at the root. 
\smallskip

%SHAULL_COMFORTING
We emphasize that this exposition does not entirely clarify how block strategies behave, but merely gives the necessary information to carry on with our analysis.

\stam{
For $z > 1$, define a new game $\G^z = \zug{V, E, w^z}$ with 
% \[ 
% w^z(v) = \begin{cases}
%     w(v) & \text{if } w(v) \geq 0 \\
%     z \cdot w(v) & \text{if } w(v) < 0.
% \end{cases}
% \]
$w^z(v) = w(v)$ if $ w(v) \geq 0$, and  $w^z(v)= z \cdot w(v)$ if $w(v)<0$. 
Denote by $\Pot^z$ and $\St^z$ the potential and strengths in $\G^z$, respectively. 
}

\stam{
The block strategy proceeds as follows. 
\begin{itemize}[itemsep=0pt]
\item \underline{Pre-processing:} 
\begin{itemize}[itemsep=0pt]
\item Find the value $\MP\big(\RT(\G)\big)$, choose $\epsilon>0$, and decrease $\MP\big(\RT(\G)\big)-\epsilon$ from all weights. 
\item Choose $z > 1$ s.t. $\MP\big(RT(\G^z)\big) \geq 0$; find $\Pot^z$ and $\St^z$.
\item Choose an initial energy level $e_0$ based on the
initial budget.
\end{itemize}
\item {\bf Choice of normalization.} 
For $n \in \Mil$, find the current energy block $\beta$. Then, the normalization factor for the next vine 
is $z^{-\beta}$. 
\item {\bf Actions.} 
At $v \in V$, bid $\St(v) \cdot z^{-\beta}$ and proceed to a neighbor $v^+ \in \neig(v)$ such that $\Pot^z(u) \leq \Pot^z(v^+)$, for every $u \in \neig(v)$.
\end{itemize}

}

\begin{example}
%yak yak We illustrate how the normalization is chosen. This is key to our analysis. For completeness, we also illustrate the intuition behind the definition, and the details are orthogonal to this work. 
Ex.~\ref{ex:strengths} in App.~\ref{app:strengths}, demonstrates that $\block$ guarantees that whenever the energy decreases, \Max ``gains'' budget, and whenever the energy increases, \Max ``invests'' budget. 
Suppose that the virtual energy is $e(\prefN) = e_1$, at some turn $n \in \Mil$ (see Fig.~\ref{fig:block}, Right). Then, in the next vine, \Max chooses $z^{-\beta}$ as the normalization factor. 
Consider the case that \Min bids $0$ for several vines, thus \Max draws the game to $\ell_3$, and importantly for this example, the energy increases. Eventually, $e(\pref{n'}) = e_2$, at some turn $n' \in \Mil$. Then, \Max's normalization factor decreases to $z^{-(\beta+1)}$. If \Min now wins and repeatedly draws the game to $\ell_2$, the energy will eventually reach $e_1$, and the normalization increases to $z^{-\beta}$. 
%The choice of normalization implies an invariant: whenever the energy is $e$, \Max's budget exceeds $B(e)$, where $B(\cdot)$ has two properties. First, $B(e)$ tends to $0$ as $e$ increases, which means that with any initial budget $B_0 > 0$, \Max can choose a high initial energy $e_0$ so that the invariant holds initially, i.e., $B_0 > B(e_0)$. Second, $B(0) \geq 1$, which means that the energy will never drop to $0$. Indeed, otherwise, \Max's budget exceeds the total budget. It is not hard to show that this second property means that any play that is consistent with $\block$ has a non-negative mean-payoff. 
\end{example}

%The block strategy guarantees that no matter how Min plays, the virtual energy never drops to $0$, i.e.,  $e_0 + \energy(\prefN) \geq 0$, for every prefix $\prefN$.

%We describe the idea and refer the reader to~\cite{AHC19} for more details. The strategy maintains an invariant $I$ between the virtual energy and the budget of Max; namely, when the virtual energy is
% \shtodo{at most?}
%  $e$, then Max's budget is greater than $I(e)$. The choice of $e_0$ is such that the invariant holds initially. Virtual energy $0$ is never reached since $I$ is defined so that $I(0) > 1$, that is virtual energy $0$ means that Max's budget exceeds $1$, which is impossible. 

\paragraph*{The generated play.}
Consider a non-zero-sum recurrent game $\G = \zug{V, E, w_1, w_2}$ and an initial configuration $c_0 = \zug{v_0, B_1}$. That is, the token is initially placed on the root, \PO is allocated $B_1 > 0$ and \PT is allocated $B_2 = 1- B_1 >0$. For $i\in \{1,2\}$, let $\G_i = \zug{V, E, w_i}$ be a zero-sum game and construct the block strategy $\block_i$ for initial budget $B_i$ as above. For ease of presentation, we assume that $\block_1$ and $\block_2$ choose the same parameters $N$ and $z$, and in Rem.~\ref{rem:diff-params}, we discuss the changes required to lift this assumption. 
In the remainder of this section, we analyze the {\em generated} play $\play(c_0, \block_1, \block_2)$.

%%%%%%%%%%%%%%%%%%%%%%%%%%%%%%%%%%%%%%%%%%%%
%\subsection{The generated play forms an ultimately periodic path}
\subsection{The generated path is ultimately periodic}
In this section, we show that the path in $\G$ that corresponds to the play generated by block strategies is ultimately periodic.

\begin{figure}[ht]
\centering
\includegraphics[height=4cm]{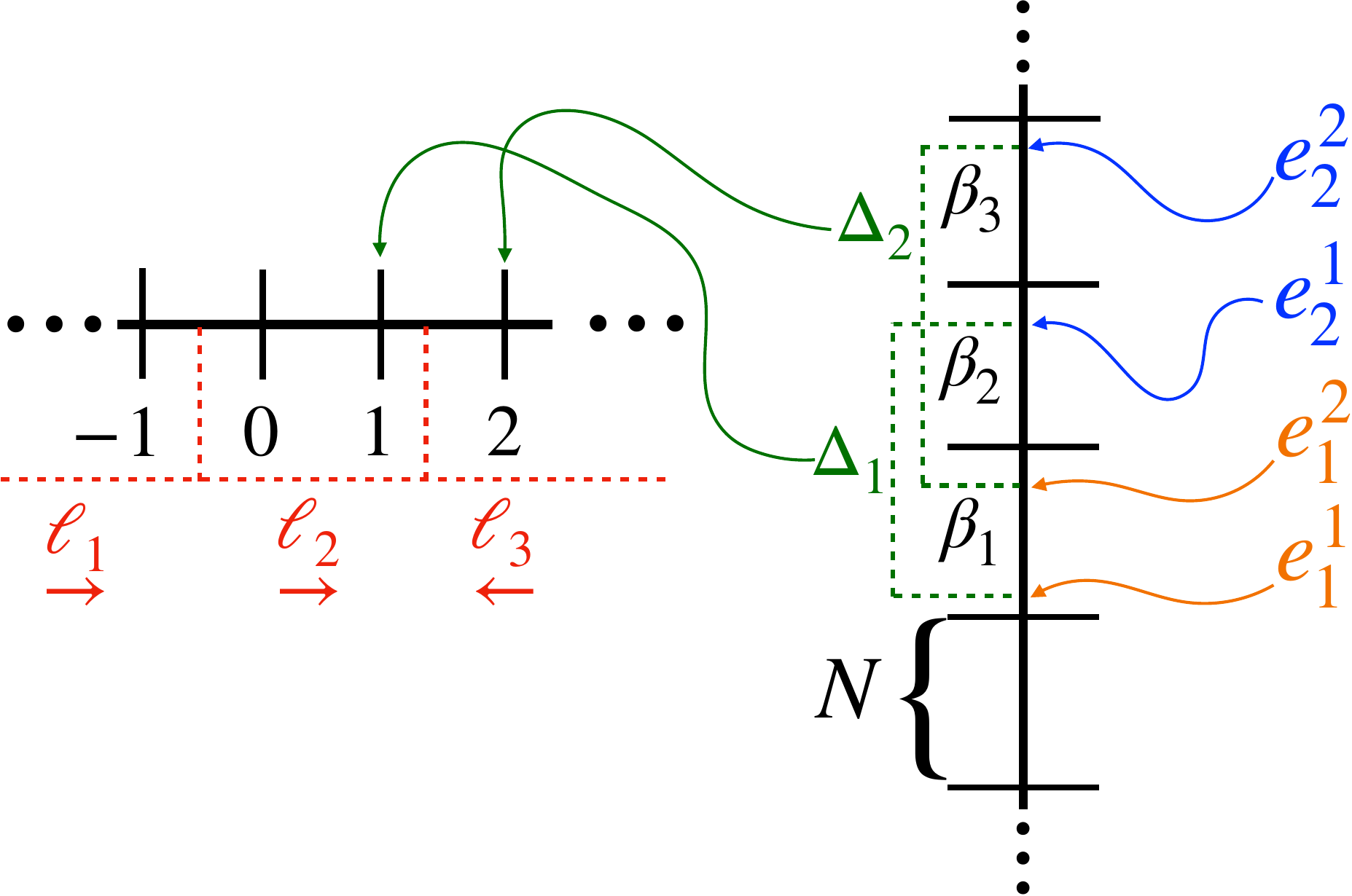}
\caption{Right: four virtual energies and the energy blocks that they belong to. Left: the partition of $\Q$ to difference intervals.}
\label{fig:Sec3}
\end{figure}

\paragraph*{Proof overview.}
We describe the main ingredients of the proof.

%We introduce some notation. For $i \in \set{1,2}$, denote by $\St^z_i$ the strengths in $\G^z_i$, by Let $e^0_i$ be the initial energy that $\block_i$ chooses, let $e_i(\prefN) = e^0_i + \sum_{0 \leq j <n} w_i(v_j)$ denote the virtual energy following prefix $\prefN$, thus its energy block is $\beta_i = \lfloor e_i(\prefN)/N \rfloor$. Recall that $\Mil$ are the milestones in which the generated play visits the root.

A central quantity is the difference between the energy blocks in a prefix, which we use $\Delta$ to denote. To illustrate, Fig.~\ref{fig:Sec3} Right, depicts four virtual energies. For $n_1, n_2 \in \Mil$ and $i \in \set{1,2}$, at at turn $n_i$, denote the virtual energy of the first coordinate by $e^i_1 = e_1(\pref{n_i})$ (depicted in \orange) and the second coordinate by $e^i_2 = e_2(\pref{n_i})$ (depicted in \blue). The energy-block difference $\Delta_1 = \beta_2-\beta_1 =1$ corresponds to $\pref{n_1}$ and $\Delta_2 = \beta_3-\beta_1 = 2$ corresponds to $\pref{n_2}$. 

We show that the energy block difference, $\Delta$, uniquely determines the next vine (Lem.~\ref{lem:interval}). The proof idea is that the normalization factor, which is tied to the difference, is determined at the root and the other aspects of the strategy (the strengths) are constant. 

We partition $\Q$ into intervals, and associate each interval with a leaf. When the token is at the root and energy-block difference is in an interval that corresponds to leaf $\ell$, then $\ell$ is the next leaf that the token visits. To illustrate, Fig.~\ref{fig:Sec3} Left, depicts three intervals separated by dashed red lines. For example, the energy-block differences $0$ and $1$ correspond to the middle interval, which is associated with leaf $\ell_2$, thus following prefix $\pref{n_1}$ with energy-block difference $\Delta_1$, the next leaf to be visited is $\ell_2$. 

We think of the sequence of energy-block difference traversed by the play as a walk on $\Z$. We associate with each interval a {\em direction}, which marks the direction in which the walk ``tends'' to proceed to (Def.~\ref{def:interval-dir}). For example, the directions in Fig.~\ref{fig:Sec3} are depicted beneath each interval. The middle interval corresponds to leaf $\ell_2$, which  might have $w_1(\ell_2)<w_2(\ell_2)$, meaning that the energy in the second coordinate (\blue) increases faster than the first coordinate (\orange), which causes the energy-block difference to ``tend'' to increase after visiting $\ell_2$. This is the situation depicted in the right part of the figure: following prefix $\pref{n_1}$, the energy-block difference is $\Delta_1$ and it grows to $\Delta_2$ in $\pref{n_2}$. 

Finally, a key lemma (Lem.~\ref{lem:non-fluc}) shows that the walk does not fluctuate arbitrarily; that is, it either tends to $\infty$, $-\infty$, or it is bounded. In the unbounded case, the walk eventually stays in one of the outer intervals. Staying in one interval means that the same leaf is visited repeatedly. In the bounded case, we identify an equivalence relation between pairs of virtual energies with finitely many classes, and deduce that eventually a cycle of vines is formed (Lem.~\ref{lem:shift-inv}).

\subsubsection{Determining the next vine}
We define the {\em energy-block difference} formally.

\begin{definition}%{\bf (Energy-block difference).}
For $n \in \Mil$, the {\em energy-block difference} is $\Delta(\prefN) =  \beta_2(\prefN)-\beta_1(\prefN)$. 
%The {\em energy-block difference sequence} is $\Delta_1,\Delta_2,\ldots$% \in \Z^\omega$ with $\Delta_j = \Delta(\pi^{\leq n_j})$, for $n_j \in \Mil$. 
\end{definition}

%Intuitively (see Fig.~\ref{fig:Sec3}), we associate with each leaf $\ell$ in $\G$ an interval $l(\ell) \subseteq \Q$ such that if $\Delta \in l(\ell)$, the next vine to be played ends in $\ell$. For example, $e^1_2$ belongs to the $\beta_2$-th block, $e^1_1$ belongs to the $\beta_1$-th block, and their difference $\Delta_1=1$ belongs to the interval $l(\ell_2)$, thus the next vine to be played ends in $\ell_2$. 

Next, we partition $\Q$ according to the energy-block differences (Fig.~\ref{fig:Sec3}, Left).

\stam{
\begin{figure}[ht]
\caption{Right: two virtual energies and the energy blocks that they belong to. Left: block numbers and $\Delta$. Bottom: the partition of $\Q$ to intervals.}
\label{fig:Sec3}
% Removed dead figure reference (inside \\stam, which discards its argument) for arXiv packaging.
\end{figure}
}

\begin{definition}
	\label{def:intervals}{\bf (Difference interval).}
The definition is inductive on the depth. For the root, define $l(v_0) = (-\infty, \infty)$.
Next, suppose that $l(v)$ is defined. For $i \in \set{1,2}$, let $u_i \in \neig(v)$ be the choice of $\block_i$ upon winning the bidding at $v$. Let $u \in \neig(v)$. If $\St^z_1(v), \St^z_2(v) \neq 0$,  define the {\em balance} of $v$  as  $\Bal(v) = \log_z\big(\frac{\St^z_2(v)}{\St^z_1(v)}\big)$ and define $l(u)$ as
    \[
        l(u) = \begin{cases}
            l(v))\cap (-\infty, \Bal(v))      & \text{if } u \neq u_1 \text{ and } u = u_2    \\
            l(v) \cap [\Bal(v), \infty)      & \text{if } u = u_1    \text{ and } u \neq u_2 \\
            l(v)                             & \text{if  } u = u_1   \text{ and } u = u_2    \\
            \emptyset                        & \text{otherwise} \\
        \end{cases}
    \]
If $\St^z_1(v)=0$, we define $l(u_2) = l(v)$. Otherwise $\St^z_2(v) =0$, 
and we define $l(u_1) = l(v)$. For every other $u \in \neig(v)$, define $l(u) = \emptyset$. 
%If either $\St^z_i(v) \neq 0$, for $i \in \set{1,2}$. Note that $\St^z_i(v) = 0$ means that $\block_i$ bids $0$ at $v$. If $\St^z_2(v)=0$, then $\block_1$ wins the bidding, and we define $l(u_1) = l(v)$. Otherwise $\St^z_2(v) =0$, and we define $l(u_2) = l(v)$. For every other $u \in \neig(v)$, define $l(u) = \emptyset$. 
\stam{%short
    \[
        l(u) = \begin{cases}
            l(v)                             & \text{if } u = u_1    \text{ and } \St_2(v) = 0 \\
            l(v)                             & \text{if } u = u_2    \text{ and } \St_1(v) = 0\\
            \emptyset                        & \text{otherwise} \\
        \end{cases}
    \]
    }
\end{definition}

Note that the boundaries of the intervals can be irrational numbers whereas we are interested in integer points within the intervals. The lemma below shows how the next vine is determined.

%The proof is in App.~\ref{app:proof of lem:interval}. In a nutshell, the definition of $\Bal(v)$ is tailored to account for the budget of the players such that the lemma is ensured.
\begin{lemma}
\label{lem:interval}
Let $n \in \Mil$ and a vertex $v \in V$. If $\Delta(\prefN) \in l(v)$, then the next vine visits $v$.
\end{lemma}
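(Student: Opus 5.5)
We argue by induction on the depth of $v$ in the tree, following the inductive structure of Def.~\ref{def:intervals}. The invariant is that if $\Delta(\prefN)\in l(v)$ then the vine starting at milestone $n$ reaches $v$. For the root this is trivial: $l(v_0)=(-\infty,\infty)$, and the vine starts at $v_0$ at turn $n$. For the step, suppose $\Delta(\prefN)\in l(u)$ for a child $u$ of $v$. We first check that the intervals of the children are nested in the parent's interval, which is immediate from every case of the definition. Hence $\Delta(\prefN)\in l(v)$, and by the induction hypothesis the token reaches $v$ during the current vine. It remains to show that the bidding at $v$ moves the token to $u$.

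The key observation is that between milestone $n$ and the next visit to the root, both normalization factors are frozen. Normalization changes only at the root, so at $v$ \PLi bids $\St^z_i(v)\cdot z^{-\beta_i}$, where $\beta_i=\beta_i(\prefN)$. Since $z>1$, when $\St^z_1(v),\St^z_2(v)>0$ we have
\[
b_1\ge b_2 \iff \St^z_1(v) z^{-\beta_1}\ge \St^z_2(v) z^{-\beta_2} \iff \beta_2-\beta_1\ge \log_z\frac{\St^z_2(v)}{\St^z_1(v)} \iff \Delta(\prefN)\ge \Bal(v).
\]
Since ties go to \PO, \PO wins exactly when $\Delta(\prefN)\in[\Bal(v),\infty)$ and moves to $u_1$. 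Otherwise \PT wins and moves to $u_2$. We then split into the cases of Def.~\ref{def:intervals}:
\begin{itemize}
\item If $u=u_2\neq u_1$, then $\Delta<\Bal(v)$, so \PT wins and moves to $u$.
\item If $u=u_1\neq u_2$, then $\Delta\ge\Bal(v)$, so \PO wins and moves to $u$.
\item If $u=u_1=u_2$, the winner is irrelevant.
\item In the remaining case $l(u)=\emptyset$, and the hypothesis is vacuous.
\end{itemize}

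The zero-strength cases are handled separately. If $\St^z_1(v)=0$, then \PO bids $0$. Assuming $\St^z_2(v)>0$, \PT wins and moves to $u_2$, which matches $l(u_2)=l(v)$. If $\St^z_2(v)=0$, then \PO wins by the tie-breaking rule, even if both bids are $0$, and moves to $u_1$. The definition's ``otherwise'' branch assigns $l(u_1)=l(v)$, which is consistent with this. Note that the definition places the case where both strengths are $0$ under $\St^z_1(v)=0$, which would mislabel it; I would read the definition so that the both-zero case falls under the $\St^z_2(v)=0$ branch.

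The main obstacle is the formalization of ``frozen normalization'': the bid at $v$ must depend only on $\beta_i(\prefN)$ for the last milestone $n$, and not on energy accumulated inside the vine. This follows from the strategy structure, which uses $\gamma(e(\prefN))$ with $n$ the last root visit. Beyond that, the argument requires care with the strict versus non-strict inequality at $\Bal(v)$ and with the direction of the monotonicity in $z$.
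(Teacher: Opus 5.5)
Your proposal is correct and follows essentially the same route as the paper: induction on depth using $l(u)\subseteq l(v)$, the fact that the normalization is fixed at the last milestone, and the equivalence $b_1\ge b_2 \iff \Delta(\prefN)\ge \Bal(v)$, followed by the case split of Def.~\ref{def:intervals}. Your explicit treatment of the zero-strength cases is more careful than the paper's ``the other case is similar,'' and you are right that when both strengths vanish, tie-breaking sends the token to $u_1$; that case must therefore be read under the $\St^z_2(v)=0$ branch for the lemma to hold.
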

\begin{proof}
We prove by induction on the depth of $v$. The base case is trivial since every vine visits $v_0$. 
 For the inductive step, let $v'$ be the parent of $v$. 
Since $l(v) \subseteq l(v')$, we have $\Delta \in l(v')$, and by the induction hypothesis, the vine visits $v'$. 
Consider the bidding at $v'$. For $i \in \set{1,2}$, \PLi bids $b_i = \St^z_i(v_k) \cdot z^{-\beta_i}$ and proceeds to $u_i \in \neig(v)$. 
Suppose that $\St^z_1(v_k), \St^z_2(v_k) \neq 0$ and the other case is similar. \PO wins the bidding iff $b_1 \geq b_2$ iff $1 \geq b_2/b_1 = (\St^z_2(v) \cdot z^{-\beta_2})/(\St^z_1(v) \cdot z^{-\beta_1})$. Rearranging, \PO wins the bidding iff $\Bal(v) \leq \Delta$. If $u_1 \neq u_2$, note that $l(u_1)$ and $l(u_2)$ are disjoint and for $i \in \set{1,2}$, \PLi wins the bidding iff $\Delta \in l(u_i)$, in which case the token moves to $u_i = v$. If $u_1 = u_2$, then $l(v) = l(v')$, and the game proceeds to $v$ no matter the bidding outcome. 
\end{proof}

%%%%%%%%%%%%%%%%%%%%%%%%%%%%%%
\subsubsection{The energy-block difference walk}
\label{sec:walk}
In this section, we prove that the energy-block difference walk does not fluctuate arbitrarily. Formally, consider the sequence $\set{e_2(\prefN) - e_1(\prefN)}_{n \in \Mil}$, then the walk is $\Delta_1,\Delta_2,\ldots$ with $\Delta_j = \Delta(\pi^{\leq n_j})$, for $n_j \in \Mil$. The following definition intuitively guides the direction of the walk.

\begin{definition}{\bf (Interval directions).} 
\label{def:interval-dir}
Let $\L$ be the set of leaves in $\G$. For $\ell\in \mathcal{L}$, let $v_0,\ldots, v_k =\ell$ be the unique path from the root to $\ell$, and define $W_i(\ell)=\sum_{0 \leq j \leq k}w_i(v_j)$ be the sum of weights on the path.  
Define a labeling $\delta$ of the leaves where $\delta(\ell)$ is $\rightarrow/\leftarrow/\bot$ if $W_2(\ell)-W_1(\ell)$ is positive/negative/zero, respectively.
% \[
%     \delta(\ell) = \begin{cases}
%         \rightarrow   & \text{if } W_2(\ell) - W_1(\ell) > 0 \\
%         \leftarrow    & \text{if } W_2(\ell)-W_1(\ell) <  0\\
%         \bot          & \text{if } W_2(\ell)-W_1(\ell) =  0
%     \end{cases} 
% \]
\end{definition}
Intuitively, 
%as we have seen (Lem.~\ref{lem:interval}), the energy-block difference sequence $\Delta_1,\Delta_2,\ldots$ determines the sequence of vines that are played. We think of the sequence as a walk on $\Z$. If 
if for $j \geq 1$, the $j$-th vine ends in a {\em right-leaning} interval, i.e., $\ell \in \L$ having $\delta(\ell) = \rightarrow$, the virtual-energy difference increases, thus we expect the walk to proceed ``right'', i.e., $\Delta_{j+1} \geq \Delta_j$. This intuition is useful but not precise. We illustrate why. Suppose that $e_1(\pref{j})$ is close to the boundary of the $\beta_1$-th block whereas $e_2(\pref{j})$ is far from the boundary of the $\beta_2$-th block. Then, for $W_1(\ell),W_2(\ell)> 0$, it is possible that only one of the virtual energies changes blocks: $e_1(\pref{j+1})$ to the $\beta_1+1$ block while $e_2(\pref{j+1})$ stays in the $\beta_2$ block, which results in $\Delta_{j+1} < \Delta_j$. 
%\shtodo{Maybe we can delegate the whole $\Gamma$ thing to the appendix: remove this and Lemma 2+Lemma 3, just keep Lemma 4. Then we also need to remove the Proof of Theorem 3.}
%Nonetheless, %in App.~\ref{app:step size} 
We show the following lemma, which intuitively shows that when the walk enters an interval, the first index in the interval is not far from the boundary. 

\begin{lemma}
\label{lem:step-size}
For $j \in \Mil$ and its successor $j' \in \Mil$, we have $|\Delta(\pref{j}) - \Delta(\pref{j'})| \leq 2$. 
\end{lemma}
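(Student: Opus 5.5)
The argument is purely arithmetic and uses only the assumption that $N$ exceeds the absolute sum of weights along any vine. Take consecutive milestones $j<j'$ in $\Mil$. Then $v_j\ldots v_{j'-1}$ is a vine ending in some leaf $\ell$. For each $i\in\set{1,2}$, the virtual energy changes by exactly the sum of $w_i$ along that vine:
\[
e_i(\pref{j'}) = e_i(\pref{j}) + W_i(\ell).
\]
By the choice of $N$, we have $|W_i(\ell)| < N$. I will read this as a bound on the absolute value; if the paper's assumption is one-sided, I would strengthen it to this form, since $N$ is the strategy's free parameter.

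The first step is to show that each block index moves by at most one. Write $\beta_i=\lfloor e_i(\pref{j})/N\rfloor$ and $\beta_i'=\lfloor e_i(\pref{j'})/N\rfloor$. Since $e_i(\pref{j'})/N$ and $e_i(\pref{j})/N$ differ by strictly less than $1$, their floors differ by at most $1$. This uses the elementary fact that $|x-y|<1$ implies $|\lfloor x\rfloor-\lfloor y\rfloor|\le 1$, which holds because $\lfloor x\rfloor \le x < y+1 < \lfloor y\rfloor+2$, and symmetrically with $x$ and $y$ swapped. Hence $\beta_i'-\beta_i\in\set{-1,0,1}$ for $i\in\set{1,2}$.

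The second step is the triangle inequality:
\[
|\Delta(\pref{j'})-\Delta(\pref{j})| = |(\beta_2'-\beta_2)-(\beta_1'-\beta_1)| \le 1+1 = 2.
\]

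There is no real obstacle in this argument. The only delicate point is that the vine's total weight must be strictly below $N$ in absolute value, so that a single vine cannot cross two block boundaries. I will state explicitly that the standing assumption on $N$ is used in this form. Note also that the bound $2$ is attained when one energy moves up a block while the other moves down, and this is exactly why the bound is not $1$.
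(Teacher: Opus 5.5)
Your proof is correct and is essentially the paper's argument. The paper first sets up the auxiliary quantity $\Gamma$ and Claims~\ref{cl:floors} and~\ref{cl:invariant}, which are really needed only for the later lemmas; it then concludes exactly as you do: since $N$ exceeds the weight of any vine, each virtual energy moves to at most a neighboring block, so $\Delta$ changes by at most $2$. Your reading of the assumption on $N$ as a bound on the absolute vine weight is the one the paper implicitly relies on.
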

\begin{proof}
We reason about another quantity: for $n \in \Mil$, define $\Gamma(\prefN) = \floor{\frac{e_2(\prefN) - e_1(\prefN)}{N}}$. The following claim connects $\Gamma$ and $\Delta$. Its proof is technical and we omit it. 

\begin{claim}
\label{cl:floors}
For all $x,y \in \Q$, the following holds: 
     \[
         \lfloor y- x \rfloor - 1 \leq  \floor{y} -  \floor{x} \leq \floor{ y - x} + 1 
     \]
 Therefore, for $n \in \Mil$, we have $\Gamma(\prefN) -1 \leq \Delta(\prefN) \leq \Gamma(\prefN) + 1$. 
\end{claim}

The following claim 
%lemma (see the proof in App.~\ref{app:invariant}) 
intuitively shows that updates to $\Gamma$ agree with our intuition: in a right-leaning interval, $\Gamma$ grows, and in a left-leaning interval, it shrinks. The lemma requires that $\Gamma$ is far from the interval's boundaries, guaranteeing that $\Delta$ belongs to the same interval.
 \begin{claim}
\label{cl:invariant}
 Consider an interval $l(\ell)$, for some $\ell \in \L$. Let $a \in \Z$ such that $\set{a-1, a, a+1} \subseteq l(\ell)$. For $j \in \Mil$ and its successor $j' \in \Mil$ we have:
     \begin{enumerate}
         \item if $\delta(\ell) = (\rightarrow)$ and $\Gamma(\pref{j}) = a$, then $\Gamma(\pref{j'}) \geq a$, and
         \item if $\delta(\ell) = (\leftarrow)$ and $\Gamma(\pref{j}) = a$, then $\Gamma(\pref{j'}) \leq a$.
     \end{enumerate}
 \end{claim}
%\begin{proof}
\noindent{\em Proof of claim:}
    We prove the first part, and the proof for the other case is dual. 
Claim.~\ref{cl:floors} implies that $\Delta(\pref{j}) \in l(\ell)$. By Lem.~\ref{lem:interval}, the next milestone ends in $\ell$. Then,
\[
\begin{split}
&\lfloor \frac{e_2(\pref{j'}) - e_1(\pref{j'})}{N} \rfloor \\
&=  \lfloor \frac{e_2(\pref{j}) - e_1(\pref{j}) + W_2(\ell) - W_1(\ell)}{N} \rfloor \\
&\geq \lfloor \frac{e_2(\pref{j}) - e_1(\pref{j})}{N} \rfloor = a.
\end{split}
\]
The inequality follows from $\delta(\ell) = (\rightarrow)$.\hfill\qed (of claim)
%\end{proof}

%Intuitively, suppose that the walk enters an interval, then the next lemma shows that the first index in the interval is not far from the boundary. 
The proof of Lemma~\ref{lem:step-size} now follows from our assumption that $N$ is larger than the sum of weights in a path in $\G$. Thus, if the energy block shifts, it shifts to a neighboring block, implying an upper bound on a step in the walk. 
\end{proof}

We conclude this section by showing that the virtual-energy difference sequence does not fluctuate unboundedly. % (see App.~\ref{app:proof of lem:non-fluc}). 

\begin{lemma}
\label{lem:non-fluc}
The sequence $\set{e_2(\prefN) - e_1(\prefN)}_{n \in \Mil}$ either tends to $\infty$, to $-\infty$, or is bounded. 
\end{lemma}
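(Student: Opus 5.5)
Write $D_j = e_2(\pi^{\le n_j}) - e_1(\pi^{\le n_j})$ for the $j$-th milestone $n_j \in \Mil$, so that $\Gamma_j = \lfloor D_j/N \rfloor$ and $\Delta_j$ is the energy-block difference. By Lem.~\ref{lem:interval}, the vine played after milestone $n_j$ ends in the unique leaf $\ell$ with $\Delta_j \in l(\ell)$. Consequently $D_{j+1} = D_j + W_2(\ell) - W_1(\ell)$, and the step is bounded by a constant smaller than $N$.

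The plan is to classify the leaves by the position of their interval $l(\ell)$ on the line. Each $l(\ell)$ is an intersection of half-lines, hence an interval. Finitely many leaves have nonempty intervals, and their intervals partition $\Q$. So exactly one interval $I_+$ is unbounded to the right, exactly one interval $I_-$ is unbounded to the left, and all others are bounded. (If a single interval is all of $\Q$, the walk has a constant step and the claim is immediate.)

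The argument then splits by the directions of the two outer intervals.

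\begin{itemize}
\item \emph{$\delta(I_+) = \rightarrow$.} Suppose that at some point $D_j \ge M$, where $M$ is large enough that $\Gamma_j - 1$ and everything above it lies in $I_+$. By Claim~\ref{cl:floors}, $\Delta_j \in I_+$. By Claim~\ref{cl:invariant}, or directly since the step $W_2 - W_1$ is a positive constant, $D$ then increases by a fixed positive amount forever, so $D \to \infty$.
\item \emph{$\delta(I_+) = \leftarrow$ or $\bot$.} Once $D$ is large, the same reasoning shows that $D$ does not increase. Because the steps are bounded, the walk can never cross above roughly $\max(D_1, M) + c$, where $c$ is the maximal step size. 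So $D$ is bounded above.
\item \emph{The left end, $I_-$.} The same two cases apply symmetrically to $I_-$.
\end{itemize}

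Combining the cases:
\begin{itemize}
\item If neither outer direction points outward, $D$ is bounded on both sides.
\item If $I_+$ points right and $D$ ever exceeds $M$, then $D \to \infty$; symmetrically, if $I_-$ points left and $D$ ever drops below $-M$, then $D \to -\infty$.
\item Otherwise $D$ stays forever within $[-M, M]$ (together with the bounded-side caps), so it is bounded.
\end{itemize}
Only one escape can happen, because once escaping is triggered it is permanent. This gives the trichotomy.

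The main obstacle is the mismatch between $D_j$, which drives the dynamics, and $\Delta_j$, which selects the leaf. The floor differences can be off by one (Claim~\ref{cl:floors}), so the threshold $M$ must be chosen with slack. Concretely, $M$ must ensure that $\Gamma - 1$, $\Gamma$, and $\Gamma + 1$ all lie in $I_+$, and that a single step of size at most $c < N$ cannot leave that region. I would handle this by setting $M = N(\sup(\text{bounded interval endpoints}) + 3) + c$. I would also check the bound-on-the-other-side case explicitly: once $D$ is above $M$ with an inward or neutral direction, $D$ is nonincreasing while in the region, and it can only re-enter from below by a step of size at most $c$.
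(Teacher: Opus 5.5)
Your proof is correct and follows essentially the same route as the paper. Both split on the direction of the outermost interval(s), using Lem.~\ref{lem:interval} and Claim~\ref{cl:floors} to show that once the difference is far enough out, the outer leaf is always selected, so the walk either escapes monotonically or is capped on that side. You track $D_j$ directly, cap re-entry by the raw step bound $c$ instead of Lem.~\ref{lem:step-size}, and group $\bot$ with $\leftarrow$. One small inaccuracy: your claim $c<N$ is not implied by the paper's assumption on $N$, which bounds each $|W_i(\ell)|$ separately, so $|W_2(\ell)-W_1(\ell)|$ can exceed $N$. Your argument never uses $c<N$, though, so nothing breaks.
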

 \begin{proof}
 We prove for the lemma for the sequence $\set{\Gamma(\prefN)}_{n \in \Mil}$, which clearly implies the lemma. 
 Denote the rightmost interval by $l(\ell_R) = [a,\infty)$. We distinguish between the three possibilities of $\delta(\ell_R)$. 
 In the first case, $\delta(\ell_R) = \leftarrow$. We show that sequence is bounded from above; namely, we show that for all $n \in \Mil$, we have $\Gamma(\prefN) \leq \max\set{\Gamma(\pref{1}),a+2}$. 
 If $\Gamma$ is within the interval $l(\ell_R)$, it only decreases. Formally, for $j \in \Mil$ and its successor $j' \in \Mil$, by Claim.~\ref{cl:invariant}, if $\Gamma(\pref{j}) \geq a+2$, then $\Gamma(\pref{j'}) \leq \Gamma(\pref{j})$. It is left to show that cannot enter $l(\ell_R)$ in arbitrarily high indices. By Lem.~\ref{lem:step-size}, if $\Delta(\pref{j'}) \in l(\ell_R)$ and $\Delta(\pref{j}) \notin l(\ell_R)$, then $\Delta(\pref{j'}) \leq a+1$, and by Claim.~\ref{cl:floors}, we have $\Gamma(\pref{j'}) \leq a+2$. 

 For the other two cases, assume towards contradiction that the sequence fluctuates arbitrarily. Then, there exists  $j \in \Mil$ with $\Gamma(\prefN) \geq (a+2)$. 
 %A simple calculation shows $\Gamma(\pref{j}) = \floor{\frac{e_2(\pref{j}) - e_1(\pref{j})}{N}} \geq (a+1)$. 
 By Claim.~\ref{cl:floors}, we have $\Delta(\pref{j}) \in l(\ell_R)$ and by Lem.~\ref{lem:interval}, the next vine visits $\ell_R$. Let $j' \in \Mil$ be the successor of $j$. 
 If $\delta(\ell_R) = \bot$, then $\Gamma(\pref{j}) = \Gamma(\pref{j'})$,
% %e_2(\pref{j'}) - e_1(\pref{j'}) = e_2(\pref{j}) - e_1(\pref{j})$, 
 meaning that the sequence stays constant.
 If $\delta(\ell_R) = \rightarrow$, then by Claim.~\ref{cl:invariant}, $\Gamma(\pref{j'}) \geq a+2$, meaning that all next vines traverse $\ell_R$, and the sequence grows indefinitely. Both contradict the assumption.
 % in a constant step size of $W_1(\ell) - W_2(\ell)$, 
 %(OLD) If $\delta(\ell_R) = \leftarrow$, then by the assumption as the energy difference fluctuates we can assume that the predecessor of $j$, denoted $j''$, has that $e_2(\pref{j''}) - e_1(\pref{j''}) \leq N \cdot (a+1)$ and thus by the choice of $N$, $e_2(\pref{j}) - e_1(\pref{j}) \leq e_2(\pref{j''}) - e_1(\pref{j''}) + N \leq N \cdot (a+2)$. And by \ref{lem:invariant} $e_2(\pref{j'}) - e_1(\pref{j'}) \leq e_2(\pref{j}) - e_1(\pref{j}) \leq N \cdot (a+2)$, which holds invariant.
 \end{proof}

%%%%%%%%%%%%%%%%%%%%%%%%%%%%%%%%%
\subsubsection{The generated path is ultimately periodic}
The next lemma is used to reason about the case that $\set{e_2(\prefN) - e_1(\prefN)}_{n \in \Mil}$ is bounded. 
It shows that the absolute virtual energy does not contribute to determining the next vine to be played, rather the key factor is the difference between virtual energies. 
%\shtodo{added text and moved proof to the appendix:}
The proof 
%(see App.~\ref{app:proof of lem:shift-inv}) 
is a consequence of Lem.~\ref{lem:interval}.
%OLD two energies by $N \cdot k$, for $k \in Z$, does not affect the vine that is played the key factor that determines the next vine is the difference between the two virtual energies and not their  

\begin{lemma}\label{lem:shift-inv}{\bf (Energy-block shift invariance).}
Let $n,n' \in \Mil$ such that there exists $k \in \Nat$ for which $e_i(\prefN) = e_i(\pref{n'}) + k\cdot N$, for $i \in \set{1,2}$. Then, the vine that is played following $\prefN$ and $\pref{n'}$ is the same. It follows that the path in $\G$ that is traversed by the suffixes of $\prefN$ and $\pref{n'}$ is the same. 
\end{lemma}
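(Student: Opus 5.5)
Lemma~\ref{lem:shift-inv} follows from Lem.~\ref{lem:interval}, combined with an induction over the milestones that come after $n$ and $n'$.

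\textbf{Step 1: the same energy-block difference.} For $i \in \set{1,2}$ we have $e_i(\prefN) = e_i(\pref{n'}) + kN$. Dividing by $N$ and taking floors gives $\beta_i(\prefN) = \beta_i(\pref{n'}) + k$, because $\floor{x + k} = \floor{x} + k$ for every integer $k$. The shift $k$ is common to both players, so it cancels in the difference: $\Delta(\prefN) = \Delta(\pref{n'})$. Call this common value $\Delta$.

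\textbf{Step 2: the same vine.} I want to show that the same vertex is chosen at each depth. By Def.~\ref{def:intervals}, the nonempty sets among $\set{l(u)}_{u \in \neig(v)}$ partition $l(v)$. In every case of the definition the two candidate sets are disjoint, and their union is $l(v)$. Starting from $l(v_0) = \Q$ and walking down the tree, I get exactly one leaf $\ell$ with $\Delta \in l(\ell)$, and $\Delta \in l(u)$ holds for every vertex $u$ on the root-to-$\ell$ path. Lem.~\ref{lem:interval} then says that the vine played after $\prefN$ visits every vertex on this path, and the same holds after $\pref{n'}$. Since the graph is a tree, the vine is determined by its leaf, so both vines equal the root-to-$\ell$ path.

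A cleaner alternative is to argue directly. In the bidding at any vertex $v$ of the vine, the bids are $\St^z_i(v) \cdot z^{-\beta_i}$. After $\pref{n'}$ both normalizations are multiplied by the same factor $z^{k}$ compared with after $\prefN$. So the ratio of the bids is unchanged, and so is every bidding outcome, including ties and zero strengths. The choices of successors are fixed in advance, so the vines coincide. I will present this argument, because it avoids the partition bookkeeping.

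\textbf{Step 3: the whole suffix.} Let $m$ and $m'$ be the next milestones after $n$ and $n'$. Both players traversed the same vine, so they collected the same weights $W_i(\ell)$ (the weight of the leaf is counted before the return to the root, consistent with Def.~\ref{def:interval-dir}). Therefore $e_i(\pref{m}) = e_i(\pref{m'}) + kN$ for $i \in \set{1,2}$, and the hypothesis holds again at $m, m'$. Induction over the milestones shows that the sequences of vines coincide, so the paths traversed by the suffixes are equal.

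The only delicate point is making sure that bids within a vine depend only on the normalization fixed at the root and on the constant strengths. Budgets play no role here: the block strategy's bids never depend on the budget, and I take legality of the bids as given. The rest of the argument is routine.
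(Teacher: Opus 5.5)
Your proposal is correct and uses the paper's idea: a common shift by $kN$ moves both block indices by $k$, so $\Delta$ is unchanged, and with it the ratio of the two bids at every vertex of the vine. The paper finishes by invoking Lem.~\ref{lem:interval}; your direct scaling-by-$z^{k}$ argument inlines that lemma. This has a small advantage: it does not depend on the case split of Def.~\ref{def:intervals}, whose clause for $\St^z_1(v)=0$ assigns the interval to $u_2$ even when both strengths vanish, although the tie then goes to \PO.

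Your Step~3 induction shows the $kN$-shift holds again at the next milestone, since both plays add the same $W_i(\ell)$. This makes explicit the ``it follows'' claim about the whole suffix, which the paper's proof leaves implicit.
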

\begin{proof}
Observe that:
\[
 \Delta(\prefN) = %\beta_2(\prefN) - \beta_1(\prefN) = 
 \lfloor \frac{e_2(\prefN)}{N} \rfloor - \lfloor \frac{e_1(\prefN)}{N} \rfloor =\] \[= \lfloor \frac{e_2(\prefN) + kN}{N} \rfloor - \lfloor \frac{e_1(\prefN) + kN}{N} \rfloor 
 =\lfloor \frac{e_2(\pref{n'})}{N} \rfloor - \lfloor \frac{e_1(\pref{n'})}{N} \rfloor = \Delta(\pref{n'}).
 \]
 Thus, $\Delta(\prefN)$ and $\Delta(\pref{n'})$ belong to the same interval and by Lem.~\ref{lem:interval}, the same vine is played next.
 \end{proof}

%%%%%%%%%%%

%We turn to the main result of this section, showing that the path in $\G$ that corresponds to the generated play is ultimately periodic.
Our main result of this section is that 
the path in $\G$ that corresponds to the generated play is ultimately periodic.

\begin{theorem}
\label{thm:budget-lasso}
 For every initial configuration $c_0$ it holds that $\path(c_0, $ $ \block_1, \block_2) = \tau_1 \cdot \tau_2^\omega$, where $\tau_1,\tau_2 \in V^*$. Moreover, let $q \in \Q$ be the granularity of the weights, i.e., all weights in $\G$ are of the form $q \cdot m$, for $m \in \Z$. Then, the mean-payoff values of the generated play is computable in $O\big(\frac{N^2}{q^2}\cdot (\max_v \Bal(v) - \min_v \Bal(v))\big)$.
% \shtodo{Maybe move ``granularity'' outside the theorem.}
\end{theorem}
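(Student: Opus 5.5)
The plan is to split on the three cases given by Lem.~\ref{lem:non-fluc} for the sequence $D_n = e_2(\prefN) - e_1(\prefN)$, $n \in \Mil$. By Lem.~\ref{lem:interval}, the vine played after milestone $n$ depends only on $\Delta(\prefN)$. Hence the path is determined by the walk $\Delta_1,\Delta_2,\ldots$. Each step of the walk is determined by the pair of virtual energies at the current milestone and the leaf that $\Delta$ selects.

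\emph{Unbounded case.} Suppose $D_n \to \infty$; the case $-\infty$ is symmetric. Then $\Gamma(\prefN) \to \infty$, and by Claim~\ref{cl:floors} so does $\Delta(\prefN)$. Eventually $\Delta$ lies permanently in the rightmost interval $l(\ell_R) = [a,\infty)$. By Lem.~\ref{lem:interval}, from that milestone on every vine is the vine to $\ell_R$. So $\tau_2$ is that single vine and $\tau_1$ is the finite prefix before the last milestone with $\Gamma < a+2$. The payoff of \PLi is $W_i(\ell_R)$ divided by the vine's length. To bound $\tau_1$ for the complexity claim, I would note that once $\Gamma \ge a+2$ the walk never leaves (as in the proof of Lem.~\ref{lem:non-fluc}). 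Before that, the walk is confined to the finitely many integer values of $\Delta$ between the boundaries, i.e., a range of length $\max_v\Bal(v)-\min_v\Bal(v)$ plus a constant (by Lem.~\ref{lem:step-size}).

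\emph{Bounded case.} Suppose $|D_n| \le M$ for all milestones. I would define an equivalence on pairs of virtual energies: $(e_1,e_2) \sim (e_1',e_2')$ iff $e_i = e_i' + kN$ for a common $k \in \Z$. All energies lie on the lattice $e_i^0 + q\Z$, so each class is determined by two data: the residue of $e_1$ modulo $N$, which gives at most $N/q$ choices, and the difference $e_2 - e_1$. Since the difference is bounded and also lies on a fixed coset of $q\Z$, this gives at most $O(M/q)$ choices, and hence finitely many classes. By pigeonhole, two milestones $n < n'$ share a class. Lem.~\ref{lem:shift-inv} then says the vine after each is the same. Both energies change by the same vine weights, so the successors are again equivalent, and by induction the suffixes after $n$ and $n'$ trace identical paths. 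This gives $\tau_1 = $ path up to $n$ and $\tau_2 = $ path from $n$ to $n'$. One technicality is that Lem.~\ref{lem:shift-inv} is stated for $k \in \Nat$; it applies equally to $k \in \Z$ by the same computation, so I would use it in that form.

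\emph{Complexity.} I expect the main obstacle to be bounding $M$, and hence the number of classes, by $O(N\cdot(\max_v\Bal(v)-\min_v\Bal(v))/q)$. The idea is that in the bounded case, $\Delta$ cannot stray beyond the outermost interval boundaries by more than a constant. Otherwise it would enter an outer interval with an outward or neutral direction and diverge or freeze, which falls under the previous case. Alternatively it would enter an inward interval, where by Claim~\ref{cl:invariant} and Lem.~\ref{lem:step-size} it overshoots the boundary by at most $2$. Thus $|\Delta| \le \max|\Bal| + O(1)$, so $|D| \le N(\text{range}+O(1))$, and the number of classes is $O\big(\frac{N}{q}\cdot\frac{N}{q}\cdot(\max\Bal-\min\Bal)\big)$. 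The algorithm simulates milestone by milestone, hashing classes, until the first repeat or until $\Gamma$ exceeds the outer threshold with an outward direction. It then reads off the payoffs as the average weights on $\tau_2$. Each step costs constant time for fixed $\G$, which yields the stated bound.
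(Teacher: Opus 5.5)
Your proposal is correct and follows the paper's proof essentially step for step. It uses the three-way split from Lem.~\ref{lem:non-fluc} and a single eventual vine in the unbounded case. In the bounded case it uses finitely many block-shift classes of energy pairs together with Lem.~\ref{lem:shift-inv} and pigeonhole; your residue of $e_1$ modulo $N$ plays the role of the paper's $d^\downarrow$, and your $D_n$ is its $d^\updownarrow$. It ends with lasso detection over these classes, as the paper does.

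One caveat, which the paper shares, concerns the range of $\Delta$ in the bounded case: the bound $\Delta\in[\min_v\Bal(v)-O(1),\max_v\Bal(v)+O(1)]$ ignores the initial difference $e^0_2-e^0_1$. That difference is fixed by the initial budgets and may lie far out in an inward-directed outer interval, so the transient (not the period) can exceed the stated count unless the single-vine stretch is fast-forwarded arithmetically.
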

\begin{proof}
It follows from Lem.~\ref{lem:non-fluc} that there are three possible outcomes for the block-difference sequence $\Delta_1,\Delta_2,\ldots$. The first two are that it  eventually stays in one of the infinite intervals, in which case eventually, only one vine is played. 

We consider the case in which the sequence is bounded. 
Let $n \in \Mil$. We associate with $\prefN$ an {\em energy configuration} $\zug{d^\downarrow(\prefN), d^\updownarrow(\prefN)}$, where $d^\updownarrow(\prefN) = (e_2(\prefN)-e_1(\prefN))$ and $d^\downarrow(\prefN)$ is the distance from the lower virtual energy to the boundary of the energy block that it belong to, formally for $e = \min\set{e_1(\prefN), e_2(\prefN)}$, assume that $e$ belongs to the $\beta$-th energy block, then $d^\downarrow(\prefN) = (e - \beta \cdot N)$. 
Observe that for $n, n' \in \Mil$ with equality between the energy configurations, i.e., $\zug{d^\downarrow(\prefN), d^\updownarrow(\prefN)} = \zug{d^\downarrow(\pref{n'}), d^\updownarrow(\pref{n'})}$, the pairs of energies are block shifted, i.e., there exists $k \in \Nat$ such that $e_i(\prefN) = e_i(\pref{n'}) + k N$, for $i \in \set{1,2}$. Thus, by Lem.~\ref{lem:shift-inv}, the next vine that is played from both prefixes is the same. 
Finally, we count the number of energy configurations that a play with a bounded energy-block difference sequence traverses. Since the length of an energy block is $N$,
% \shtodo{here we explicitly assume that the energy block is a subset of $\bbN$, no?}
% \jutodo{No, this is the energies that are rationals.}
% \shtodo{So why is its size $N$? Seems like its size is infinite.}
% \jutodo{Each block is $[Ni, Ni + N)$, so the size is $N$, and since the granularity is $q$, there are $N/q$ possible values for $d^\downarrow$.}
% \shtodo{I see it's changed to ``length''. I guess it's ok (although usually intervals of rationals are not associated with length).}
 there are $N/q$ possibilities for $d^\downarrow$. As seen in Lem.~\ref{lem:non-fluc}, the maximal and minimal possible values of $\Delta$ are respectively $\max_v \Bal(v) + 1$ and $\min_v \Bal(v) - 1$, thus there are at most $(\max_v \Bal(v) - \min_v \Bal(v)) + 2)\cdot \frac{N}{q}$ possible values for $d^\updownarrow$. Since both are finite, eventually an energy configuration repeats, from which point the same sequence of vines is traversed.

%In App~\ref{app:alg budget lasso} we show how to obtain from the proof an algorithm to compute the payoffs of the generated play.

We conclude with 
an algorithm to compute the payoffs of the generated play. Construct a graph in which each vertex corresponds to an energy configuration. 
As mentioned above, each $d = \zug{d^\downarrow, d^\updownarrow}$ determines the next vine to be played $M(d)$. The (unique) neighbor of $d$ is the energy configuration that corresponds to the energy updates following $M(d)$. More formally, suppose that a pair of energies $\zug{e_1, e_2} \in \Q^2$ is mapped to $d$, then the neighbor of $d$ is the energy configuration that $\zug{e_1 + W_1\big(M(d)\big), e_2 + W_2\big(M(d)\big)}$ is mapped to. We add two sinks that corresponds to unbounded walks. Once the graph is constructed, all that is left is to find the vertex $d_0$ that corresponds to the initial choice of energies given by the initial configuration $c_0$, and find the ultimately period path in the graph that starts from $d_0$. Let the path be $\tau_1 \cdot \tau_2$ with $\tau_2 = d_1,\ldots, d_k$. Then, for $i \in \set{1,2}$, we have $\MP_i(c_0, \block_1, \block_2) = \sum_{1 \leq j \leq k} W_i\big(M(d_j)\big) / \sum_{1 \leq j \leq k} |M(d_j)|$.
% We conclude with an algorithm to compute the payoffs of the generated play. Construct a graph in which each vertex corresponds to an energy configuration. 
% As mentioned above, each $d = \zug{d^\downarrow, d^\updownarrow}$ determines the next vine to be played $M(d)$. The (unique) neighbor of $d$ is the energy configuration that corresponds to the energy updates following $M(d)$. More formally, suppose that a pair of energies $\zug{e_1, e_2} \in \Q^2$ is mapped to $d$, then the neighbor of $d$ is the energy configuration that $\zug{e_1 + W_1\big(M(d)\big), e_2 + W_2\big(M(d)\big)}$ is mapped to. We add two sinks that corresponds to unbounded walks. Once the graph is constructed, all that is left is to find the vertex $d_0$ that corresponds to the initial choice of energies given by the initial configuration $c_0$, and find the ultimately period path in the graph that starts from $d_0$. Let the path be $\tau_1 \cdot \tau_2$ with $\tau_2 = d_1,\ldots, d_k$. Then, for $i \in \set{1,2}$, we have $\MP_i(c_0, \block_1, \block_2) = \sum_{1 \leq j \leq k} W_i\big(M(d_j)\big) / \sum_{1 \leq j \leq k} |M(d_j)|$.
\end{proof}

We point to an interesting corollary of Thm.~\ref{thm:budget-lasso}. Let $C$ be the cycle of vines that the generated play converges to. Observe that if the energy difference is bounded, then $\energy_1(C) = \energy_2(C)$ and \PLi's payoff is $\MP_i(\pi) = \energy_i(C)/|C|$, for $i \in \set{1,2}$. 
%We remind the suspicious reader that when constructing $\block_i$, we first normalize the weights so that $\MP_1(\G) = \MP_2(\G) = \epsilon$. 

\begin{corollary}{\bf (Fair payoffs).}\label{cor:fair}
Consider a game $\G$ such that $e_2(\prefN) - e_1(\prefN)$ is bounded. Then, $MP_1(\pi) = MP_2(\pi)$. 
\end{corollary}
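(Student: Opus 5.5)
The plan is to combine the ultimately periodic structure from Thm.~\ref{thm:budget-lasso} with the boundedness hypothesis, and argue that the periodic part of the play must have zero net change in the energy difference. By Thm.~\ref{thm:budget-lasso}, the path is $\tau_1\cdot\tau_2^\omega$. Since the play is at the root at every milestone, I can choose $\tau_2$ to be a concatenation of vines $C = M(d_1)\cdots M(d_k)$ that starts at a milestone. The proof of the theorem gives exactly this: the cycle in the energy-configuration graph.

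Let $D = \energy_2(C) - \energy_1(C) = \sum_{j} \big(W_2(M(d_j)) - W_1(M(d_j))\big)$. I would then use that the virtual energies change only by path weights. Consider the milestones $n^{(t)}$ at the start of the $t$-th repetition of $C$. By Def.~\ref{def:MP} and the definition of virtual energy,
\[
e_2(\pi^{\leq n^{(t)}}) - e_1(\pi^{\leq n^{(t)}}) = e_2(\pi^{\leq n^{(0)}}) - e_1(\pi^{\leq n^{(0)}}) + tD .
\]
If $D \neq 0$, this difference tends to $\pm\infty$, which contradicts the hypothesis that $e_2(\prefN)-e_1(\prefN)$ is bounded. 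Hence $D=0$, that is, $\energy_1(C)=\energy_2(C)$.

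Finally, I would compute the payoffs. The finite prefix $\tau_1$ does not affect the liminf average. For prefixes ending in the middle of a copy of $C$, the contribution of a partial copy is bounded by a constant (the maximal absolute weight times $|C|$), so it vanishes after dividing by $n$. Therefore $\MP_i(\pi) = \energy_i(C)/|C|$ for $i\in\set{1,2}$, which is also the formula in the proof of Thm.~\ref{thm:budget-lasso}. Since $\energy_1(C)=\energy_2(C)$, the two payoffs coincide.

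I expect the only delicate point to be fixing conventions. Energies are summed over $0\le j<n$, and $W_i(\ell)$ sums the weights along a vine including the root. I need the increment of $e_i$ between consecutive milestones to equal exactly $W_i$ of the vine, so that it matches $\energy_i$ of that vine segment; this follows since a vine is exactly $v_{n_i}\ldots v_{n_{i+1}-1}$. I also need to check that the period may be taken to begin at a milestone. If needed, I would rotate it, since the repeated energy configuration in the proof is at a milestone anyway.
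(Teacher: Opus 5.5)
Your proposal is correct and follows the paper's own argument. You take the cycle $C$ of vines to which the play converges by Thm.~\ref{thm:budget-lasso}, use boundedness of the energy difference to get $\energy_1(C)=\energy_2(C)$, and conclude via $\MP_i(\pi)=\energy_i(C)/|C|$; your $tD$ argument and your check that the energy increment over a vine equals $W_i$ make explicit what the paper states only as observations.
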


\begin{example}
Recall the two generated plays that are described in Ex.~\ref{ex:intro}. The walk when \blue plays against \orange is bounded, thus both payoffs are $0.5$, however when \blue plays against \red, the walk is unbounded and the payoffs differ. Fig.~\ref{fig:plots2} is similar to the plots in Fig.~\ref{fig:plots} only at a reduced granularity. This highlights the slopes of the plots, which coincide with the mean-payoff.
\begin{figure}[ht]
\centering
\includegraphics[width=4.2cm]{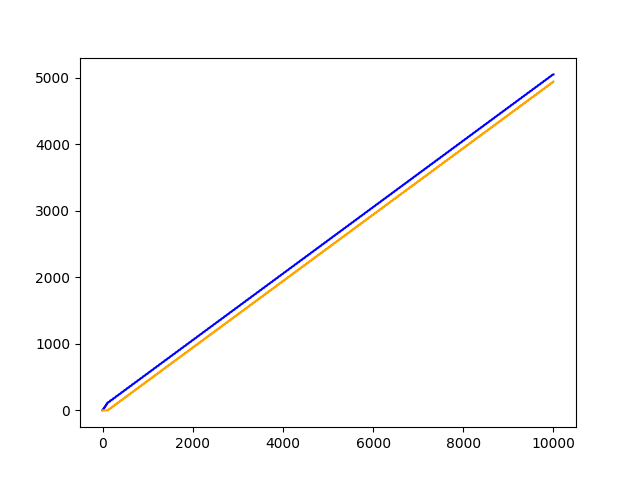}
\includegraphics[width=4.2cm]{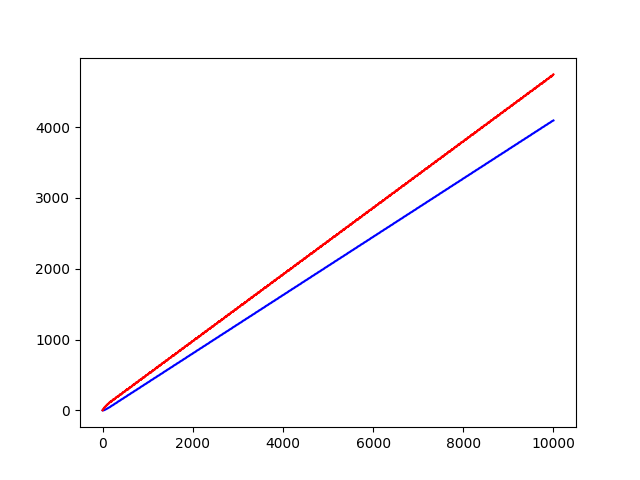}
\caption{The plots for generated plays as in Fig.~\ref{fig:plots} only that the simulation is for 10k turns rather than 500.}
\label{fig:plots2}
\end{figure}
\end{example}

\begin{remark}[Choosing different parameters]
\label{rem:diff-params}
We describe the changes needed to lift the assumption that the two block strategies choose the same parameters $N$ and $z$. 
First, Def.~\ref{def:intervals} is used to anticipate which vine will be played next based on the energy-block difference of the strategies. The choice of $z$ affects the normalization of the bids throughout the vine, and note that the normalization does not directly depend on $N$. Thus, when the two strategies choose different values of $z$, we need a more complicated expression in the definition: $\Bal(v) = (\log(z_2)/\log(z_1)) \cdot \log(St_2(v)/St_1(v))$.  
Second, different choices of $N$ mean that the upper bound on the number of block differences in Thm.~\ref{thm:budget-lasso} needs to grow. 
\end{remark}

%%%%%%%%%%%%%%%%%%%%%%%%%%%%%%%%%%%%
%\shtodo{Maybe turn this subsection to a paragraph? (remove the proofs and maybe shorten the phrasing (e.g., combine Assumption 1 and Theorem 3).}
\subsection{Games with long intervals have small cycles}
In this section we assume that the intervals are large:
% \begin{assumption}
 %\label{asmp:large-intr}
for every leaf $\ell$, we have $|l(\ell) \cap \Z| \geq 3$. 
 %\end{assumption}
We characterize the recurrent behavior of such games. Let $\inf(\pi)$ denote the set of leaves that are visited infinitely often in $\pi$. %(See \cref{app:two-vines} for the proof).
\begin{theorem}
\label{thm:two-vines}
If for every leaf $\ell$, we have $|l(\ell) \cap \Z| \geq 3$, then $|\inf(\pi)| \leq 2$.  
\end{theorem}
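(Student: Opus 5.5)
Our plan is to analyze the walk of $\Gamma$ (equivalently $\Delta$) on $\Z$, using that the nonempty leaf intervals $l(\ell)$ partition $\R$ into consecutive intervals ordered left to right. By Lem.~\ref{lem:non-fluc}, either the walk tends to $\pm\infty$, in which case it eventually stays in an outer interval and $|\inf(\pi)| = 1$, or it is bounded. In the bounded case, the key idea is that the walk behaves like a process with a ``drift field'' given by the directions $\delta(\ell)$. The hypothesis $|l(\ell)\cap\Z|\geq 3$ guarantees that every interval contains an integer $a$ with $\set{a-1,a,a+1}\subseteq l(\ell)$. This is exactly the situation in which Claim~\ref{cl:invariant} applies, so $\Gamma$ moves monotonically in the direction $\delta(\ell)$ while it stays away from the boundaries.

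The first step handles an interval labelled $\bot$. If $\Gamma$ ever reaches an interior point $a$ of such an interval, then $\Delta\in l(\ell)$ by Claim~\ref{cl:floors}. The vine to $\ell$ is played and the energy difference is unchanged, so the walk stays there forever and $|\inf(\pi)|=1$. The delicate case is entry at a boundary point. Here we would use Lem.~\ref{lem:step-size} together with the three-integer width to argue that $\Delta$ can land on an endpoint with $\Gamma$ straddling two intervals, but that this yields at most an alternation between two adjacent leaves.

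The second step handles two adjacent intervals $l(\ell)$ and $l(\ell')$ with $\ell$ left of $\ell'$. If $\delta(\ell)=\leftarrow$, or $\delta(\ell')=\rightarrow$, then the walk, once deep inside, cannot cross back through that side. Orienting every interval, the walk in the bounded case must be trapped near a ``sink'' boundary: a point where an interval labelled $\rightarrow$ sits immediately to the left of an interval labelled $\leftarrow$. We want to show that from some point on the walk never leaves the union of these two intervals. Suppose it left, say to the left through the $\rightarrow$ interval. Since steps have size at most $2$ (Lem.~\ref{lem:step-size}) and the interval contains at least three integers, $\Gamma$ would have to pass through an interior point $a$. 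From there Claim~\ref{cl:invariant} forces $\Gamma$ to be non-decreasing until it approaches the right boundary. The symmetric argument applies on the other side. Hence $\inf(\pi)$ consists only of the two leaves adjacent to the sink boundary.

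The main obstacle is the boundary bookkeeping. $\Gamma$ and $\Delta$ may differ by one (Claim~\ref{cl:floors}), and $\Delta$ (not $\Gamma$) determines the vine, so near boundaries a step could go against the drift. We would first try to prove an auxiliary claim: if $\Gamma\in\set{a-1,a,a+1}\subseteq l(\ell)$ with $\delta(\ell)=\rightarrow$, then the walk never again goes below $a-1$. Such a claim turns each labelled interval into a one-way gate, which combined with boundedness pins the limit set to at most two adjacent leaves. If it fails, the fallback is the finite energy-configuration graph from Thm.~\ref{thm:budget-lasso}. There we would argue directly that any cycle of this graph crossing three intervals forces some interior point to be visited while moving against its drift, contradicting Claim~\ref{cl:invariant}.
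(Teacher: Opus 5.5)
Your proposal has a genuine gap: the decisive step, showing that no third leaf can recur, is never proved. Your second step rests on the assertion that a bounded walk is eventually trapped at a ``sink'' (an interval labelled $\rightarrow$ immediately left of one labelled $\leftarrow$). What you actually argue is only that the union of two sink intervals cannot be left once the walk is deep inside it. You never show that the walk gets there, or which sink it reaches. The sink picture also ignores $\bot$ intervals, and your first step treats a $\bot$ interval only for entry at an interior point, explicitly deferring the boundary case. So the two steps never combine into a bound on $|\inf(\pi)|$, and the fallback via the energy-configuration graph is only a plan. A smaller point: ``$\Gamma$ must pass through an interior point'' needs that $\Gamma$ changes by at most $1$ per vine, which follows from the choice of $N$. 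Lemma~\ref{lem:step-size} does not give this, since it only bounds the steps of $\Delta$ by $2$.

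The missing idea is that no global classification is needed. Moreover, a $\bot$ interval is a barrier for exactly the same reason as a $\rightarrow$ interval. Whenever $\Delta\in l(\ell)$ with $\delta(\ell)\in\set{\rightarrow,\bot}$, the vine to $\ell$ does not decrease $e_2-e_1$, hence does not decrease $\Gamma$; you do not need the interior hypothesis of Claim~\ref{cl:invariant}. The paper argues by contradiction:
\begin{itemize}
\item Suppose three leaves recur, with intervals $l(\ell_1),l(\ell_2),l(\ell_3)$ ordered left to right. By symmetry assume $\delta(\ell_2)\in\set{\rightarrow,\bot}$, and let $a$ be the least integer of $l(\ell_2)$.
\item At a visit to $l(\ell_3)$, the width assumption gives $\Delta\ge a+3$, so $\Gamma\ge a+2$ by Claim~\ref{cl:floors}.
\item From then on $\Gamma\ge a+1$ is invariant. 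It forces $\Delta\ge a$, so either $\Delta\in l(\ell_2)$ and $\Gamma$ does not drop, or $\Delta\ge a+3$. In the latter case $\Gamma\ge a+2$, and one vine lowers $\Gamma$ by at most $1$.
\item Hence $\Delta\ge a$ forever, and $l(\ell_1)$ is never visited again, a contradiction.
\end{itemize}
Your one-way-gate claim is essentially this invariant, so you had the right lemma. Applying it to the middle of three recurrent intervals finishes the proof. It also makes Lemma~\ref{lem:non-fluc}, the bounded/unbounded split, the sink analysis and the fallback unnecessary.
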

\begin{proof}
Assume towards contradiction that $\pi$ visits at least three vines infinitely often. Observe the walk $\Delta_1,\Delta_2,\ldots$. By Lem.~\ref{lem:interval}, the walk must traverse three adjacent intervals $l(\ell_1), l(\ell_2)$, and $l(\ell_3)$ infinitely often. 
Denote $l(\ell_2) = [a,b]$. Due to symmetry, assume Wlog that $\delta(l(\ell_2)) = (\rightarrow)$ or $\delta(l(\ell_2)) = \bot$. 
Let $i \in \Mil$ such that the walk visits $l(\ell_3)$, i.e.  $\Delta(\pref{i}) \geq a + 3$. By 
 the premise of the theorem, it holds that each leaf $\ell$ has $|l(\ell) \cap \Z| \geq 3$. 
%Asmp.~\ref{asmp:large-intr}, 
 $\Gamma(\pref{i}) \geq a + 1$. %App.~\ref{app:two-vines} 

 We prove that for every $k\geq i$, we have $\Gamma(\pref{k}) \geq a+1$. By Claim.~\ref{cl:floors},  $\Delta(\pref{k}) \geq a$. Thus, $l(\ell_1)$ is not reached, which is a contradiction. 
 Assume towards contradiction that $\pi$ visits at least three vines infinitely often. Observe the walk $\Delta_1,\Delta_2,\ldots$. By Lem.~\ref{lem:interval}, the walk must traverse three adjacent intervals $l(\ell_1), l(\ell_2)$, and $l(\ell_3)$ infinitely often. 
Denote $l(\ell_2) = [a,b]$. Due to symmetry, assume Wlog that $\delta(l(\ell_2)) = (\rightarrow)$ or $\delta(l(\ell_2)) = \bot$. 
Let $i \in \Mil$ such that 
%the walk visits $l(\ell_3)$, i.e. 
$\Delta(\pref{i}) \geq a + 3$. By 
the premise of the theorem, it holds that 
%Asmp.~\ref{asmp:large-intr}, 
$\Gamma(\pref{i}) \geq a + 1$. 

%App.~\ref{app:two-vines} proves 
It is now sufficient to prove that for every $k\geq i$, we have $\Gamma(\pref{k}) \geq a+1$. Indeed, by Claim.~\ref{cl:floors},  $\Delta(\pref{k}) \geq a$. Thus, $l(\ell_1)$ is not reached, which is a contradiction. 

We prove this holds. 
If $\Gamma(\pref{k}) \geq a + 1$ then $\Delta(\pref{k}) \geq a$, and thus either we visit $l(\ell_2)$ at iteration $k$ and so $\Gamma(\pref{k+1}) \geq \Gamma(\pref{k}) \geq a + 1$ by our assumption on $\delta(l(\ell_2))$, or we visit $l(\ell_3)$ at the current step, thus $\Gamma(\pref{k}) \geq \Delta(\pref{k}) - 1 \geq a + 2$ but notice that by the choice of $N$ we get that $\Gamma(\pref{k+1}) \geq \Gamma(\pref{k}) -1 \geq a + 1$.  In all cases we proved that for every $k \geq i$ it holds $\Gamma(\pref{k}) \geq a + 1$

% % \stam{

% %     In all cases we proved that $\Gamma(\pref{k}) \geq a_2 + 2$, and thus by Claim.~\ref{cl:floors} for every $k > i$, we have that  $\Delta(\pref{k}) \geq a_2$, which contradicts the assumption that $l_1\in \text{inf}(\pi)$.
% % Assume Wlog that $\delta(\ell_2) = \bot$ or $\delta(\ell_2) = \rightarrow$. 
% % Suppose that the walk is in $l(\ell_3)$. By Lem.~\ref{lem:step-size} and Assumption~\ref{asmp:large-intr}, it must visit $l(\ell_2)$ before visiting $l(\ell_1)$. Moreover, let $\Delta(\prefN)$ be the first visit to $l(\ell_2)$, then $\Delta(\prefN) \geq (b-1) \geq (a+2)$. 
% % \shtodo{what are $a$ and $b$? (I'm guessing the end points of the interval, but we should specify.)}
% % A similar argument as in the proof of Lem.~\ref{lem:non-fluc}, shows that $\Delta(\prefN) \geq (a+1)$ invariantly, contradicting that $l(\ell_1)$ is reached.
% % }
 \end{proof}

We characterize the payoffs below. %(see App.~\ref{app:short-MP}). 
For a leaf $\ell$, let $\len(\ell)$ be the length of the vine from $v_0$ to $\ell$. 
\begin{theorem}
\label{thm:short-MP}
Let $i \in \set{1,2}$. If $\inf(\pi) = \{\ell\}$,
then $\MP_i(\pi) = W_i(\ell)/\len(\ell)$. If $\inf(\pi) = \set{\ell_1,\ell_2}$, then $\MP_i(\pi) = \frac{\alpha W_i(\ell_1) + W_i(\ell_2)}{\alpha \len(\ell_1) + \len(\ell_2)}$, where $\alpha = \frac{W_1(\ell_1) - W_2(\ell_1)}{W_2(\ell_2) - W_1(\ell_2)}$.
\end{theorem}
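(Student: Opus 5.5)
The proof splits according to $|\inf(\pi)|$, which by Thm.~\ref{thm:two-vines} is $1$ or $2$. In both cases we reduce the play to its tail after some milestone $n^*$, beyond which only vines ending in leaves of $\inf(\pi)$ are played. The finite prefix before $n^*$ affects neither $\liminf$ average, since energies change by a bounded amount per step. We use the liminf in Def.~\ref{def:MP}, and handle times $n$ that fall in the middle of a vine by noting that a partial vine contributes $O(1)$ energy and $O(1)$ length. This lets us evaluate $\MP_i$ along the subsequence of milestones only.

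\textbf{Case $\inf(\pi)=\{\ell\}$.} Eventually the path is $(v_0\cdots\ell)^\omega$. After $k$ further vines the energy of Player~$i$ is $kW_i(\ell)+O(1)$ and the elapsed time is $k\,\len(\ell)+O(1)$. Dividing and letting $k\to\infty$ gives $W_i(\ell)/\len(\ell)$.

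\textbf{Case $\inf(\pi)=\{\ell_1,\ell_2\}$.} Let $k_1(m),k_2(m)$ be the numbers of $\ell_1$- and $\ell_2$-vines among the first $m$ vines after $n^*$. Then
\[
\energy_i = k_1W_i(\ell_1)+k_2W_i(\ell_2)+O(1), \qquad \text{time} = k_1\len(\ell_1)+k_2\len(\ell_2)+O(1).
\]
The key step is to show that the difference sequence $e_2-e_1$ stays bounded along the milestones. The walk visits two intervals infinitely often, so it does not tend to $\pm\infty$ in a way that eventually remains in one interval. By Lem.~\ref{lem:non-fluc}, the sequence $\{e_2(\prefN)-e_1(\prefN)\}$ is therefore bounded. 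Writing $D_j=W_2(\ell_j)-W_1(\ell_j)$, boundedness gives
\[
k_1D_1+k_2D_2 = O(1).
\]
Both intervals are visited infinitely often, so $k_1,k_2\to\infty$. This forces $D_1$ and $D_2$ to have opposite signs, or both to vanish. In the generic case $D_2\neq 0$ we get $k_1/k_2\to -D_2/D_1$. We need to reconcile this with the paper's $\alpha$; this is the main obstacle below.

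\textbf{Limit of the averages.} Dividing numerator and denominator by $k_2$, the ratio converges to
\[
\frac{\alpha' W_i(\ell_1)+W_i(\ell_2)}{\alpha'\len(\ell_1)+\len(\ell_2)}, \qquad \alpha' = \lim_m \frac{k_1}{k_2}.
\]
Since the limit exists, the liminf equals it.

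\textbf{Main obstacle: identifying $\alpha$.} The subtle point is checking that $\alpha'$ equals the stated $\alpha=\frac{W_1(\ell_1)-W_2(\ell_1)}{W_2(\ell_2)-W_1(\ell_2)}$. The balance relation $k_1D_1+k_2D_2=O(1)$ gives $k_1/k_2\to -D_2/D_1$, which is the reciprocal of the stated $\alpha$. The statement's convention is apparently that $\alpha$ weights $\ell_1$ (or the vine labels are swapped), so the orientation must be checked carefully against the derived balance relation; I would state whichever form the relation yields.

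\textbf{Degenerate case.} If $D_1=D_2=0$, then $\alpha$ is undefined and the formula must be handled separately. Otherwise boundedness forces both $D_j$ to be nonzero with opposite signs, consistent with the directions of adjacent intervals, and the formula is well defined.
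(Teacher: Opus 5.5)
Your proof is correct and follows the paper's argument: two leaves visited infinitely often, together with Lem.~\ref{lem:non-fluc}, force a bounded energy difference, and the balance $k_1D_1+k_2D_2=O(1)$ then fixes the ratio of $\ell_1$- to $\ell_2$-vines (you use asymptotic counts where the paper uses the exact period $C$ from Thm.~\ref{thm:budget-lasso}, which is harmless and slightly more self-contained). Your worry about $\alpha$ is justified, and you should commit to it rather than hedge: the balance gives $k_1/k_2\to\frac{W_2(\ell_2)-W_1(\ell_2)}{W_1(\ell_1)-W_2(\ell_1)}$, the reciprocal of the printed $\alpha$, and this is the value used in the second version of the paper's own computation and in its worked example ($\alpha=2$, both payoffs $0.5$), whereas the printed $\alpha$ gives payoffs $5/7$ and $2/7$ in that example under either labeling of the leaves, contradicting Cor.~\ref{cor:fair}.
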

\begin{proof}
The case of one vine is easy to see. Suppose that two vines $\ell_1$ and $\ell_2$ are visited infinitely often. Observe that this implies that the energy difference is bounded, thus denoting by $C$ the period of $\pi$, we have $\energy_1(C) = \energy_2(C)$. Suppose that the number of times that $C$ traverses $\ell_1$ and $\ell_2$ is $k$ and $t$, respectively. Thus, $\energy_i(C) = k \cdot W_i(\ell_1) + t \cdot W_i(\ell_2)$, for $i \in \set{1,2}$. Using $\alpha$ as in the statement leads to $k =  \alpha \cdot t$ and to the mean-payoff expression. 

Now suppose that $\text{inf}(\pi) = \{l(v_1), l(v_2)\}$, and assume that the cycle visits $v_1$ and $v_2$ $k$ and $t$ times respectively, then we enter a cycle $C$ where $W_1(C) = W_2(C)$ such that $W_i(C)$ is the added energy to player $i$ through the cycle, since the energy difference is bounded. We can see that,
  \[
      \forall i \in \{1,2\} w^i(C) = kw^i(P_1) + tw^i(P_2)
  \]
 and by the equality, denoting $\alpha = \frac{W_2(v_2) - W_1(v_2)}{W_1(v_1) - W_2(v_1)}$ we get that $k =  \alpha \cdot t$. 
    
It is easy to see that every visit to $v_j$ adds $W_i(v_j)$ to the energy of player $i$, and takes $\lambda(v_j)$ steps. Since the prefix before the cycle and the suffix of a partial cycle are both finite, we can ignore them and compute the mean-payoff of the game as:
\[
      \MP_i(\pi) = \frac{W_i(C)}{\lambda(C)} = \frac{kW_i(v_1) + tW_i(v_2)}{\lambda(v_1) k + \lambda(v_2) t}
  \]
  \[
      = \frac{\alpha t W_i(v_1) + tW_i(v_2)}{(\lambda(v_1) \alpha + \lambda(v_2)) t} =\frac{\alpha W_i(v_1) + W_i(v_2)}{\lambda(v_1) \alpha + \lambda(v_2)}
  \]
\end{proof}

\begin{example}
Consider the game depicted in Fig.~\ref{fig:example}. Note since two leaves are reachable, there are two non-empty intervals, both of which are unbounded, thus %Assumption~\ref{asmp:large-intr} holds. 
Thm.~\ref{thm:two-vines} applies.
Plugging $W_1(\ell_1) = 1$, $W_2(\ell_1) = 2$, $W_1(\ell_3) = 2$, $W_2(\ell_3) = 0$, $\len(\ell_1) = 3$, and $\len(\ell_3)=2$, thus $\alpha = 2$ and $\MP_1(\pi) = \MP_2(\pi) =\frac{2 \cdot 2 + 0}{2 \cdot 3 + 2} = 0.5$.
\end{example}

%%%%%%%%%%%%%%%%%%%%%%%%%%%%%%%%%%%%%%%%%%

\section{The Generated Path for Budget Strategies}
We now turn to analyze the path induced by \emph{budget strategies}, introduced in~\cite{AJZ21}. Intuitively, in a budget strategy each player bids a constant fraction of their budget, based on the current vertex, and chooses the successor with the highest potential, if they win. 

We show that for recurrent games, such strategies yield a piecewise-linear dynamics of the budgets. We give a condition on these dynamics that guarantees the path that corresponds to the strategies is ultimately periodic (c.f. Thm.~\ref{thm:disc finite then lasso}). However, showing this condition holds in the general case is currently out of reach. Nonetheless, we manage to show it holds for the subclass of recurrent games whose tree is of depth $1$, called {\em repeated bidding games}. 
%\emph{bowtie games}.

\paragraph*{The budget strategy}
We describe the budget strategy $\budget_i$ for \PLi, for $i \in \set{1,2}$, as constructed in~\cite{AJZ21}. The main difference from the block strategy is that the choice of normalization factor now depends on the current budget. The strategy proceeds as follows. 
%For $\varepsilon > 0$, we solve $\RT(\G_i, \frac{1}{2+\varepsilon})$, i.e., the random-turn game with a coin with bias $\frac{1}{2+\varepsilon}$ towards Max, to get potentials and strengths. 
\PLi chooses a constant $\alpha_i\in (0,1)$ such that $(1 - \alpha_i) = (1 + \alpha_i)^{-(1+\varepsilon)}$, which is shown to always exist. 
When reaching vertex $v$ with budget $B_i$, \PLi bids $b_i = \alpha_i \frac{\St_i(v)}{\St_{max}}\cdot B_i$, where $\St_{max}$ is the maximal strength among \PLi's strengths, and upon winning the bidding, proceeds to $v^+$ that maximizes the potential similar to the block strategy (see App.~\ref{app:strengths}). 
%SHAULL_COMFORT
As in the case of block strategies, we remark that the presentation of the budget strategies is condensed and does not clarify the entirety of the construction. We only include the bare minimum required to understand our analysis.

%The strategy guarantees a payoff of at least $\MP(\RT(\G_i, \frac{1}{2+\varepsilon}))$.

\stam{%OLD
We start by recalling the definition and properties of budget strategies from~\cite{AJZ21}. 
For a positive $\varepsilon > 0$, when \PLi wants to guarantee a payoff of $\MP(\RT(\G_i, \frac{1}{2+\varepsilon}))$
 against an adversary, they choose a constant $\alpha_i\in (0,1)$ 
  such that $(1 - \alpha_i) = (1 + \alpha_i)^{-(1+\varepsilon)}$. As explained in \cite{Infinite-Duration All-Pay Bidding Games}, this is equivalent to guaranteeing a payoff of $\MP(\RT(\G_i, \frac{1}{2}))-\varepsilon'$, where for every $\varepsilon'$ there exists a corresponding $\varepsilon > 0$ for the above.
The existence of such $\alpha_i$ is guaranteed by~\cite{AJZ21}. At every vertex $v$, \PLi bids $b_i = \alpha_i \frac{\St_i(v)}{\St_{max}}\cdot B_i$ where $B_i$ is the current budget of \PLi, $\St_{max}$ is the maximal strength in \PLi's strengths, and proceeds to $v^+$ that maximizes the potential of \PLi upon winning. 
}

% We consider a specific game to which we refer with a bowtie-game. A \emph{bowtie game} is a recurrent game that can be described as a tree with two levels, that is, it consists of a root and two leaves. In the bowtie game, since the strength at the leaves are 0, the maximal strength is the strength in the root. That is, in the root, \PLi bids $b_i = \alpha_i \cdot B_i = \alpha_i \cdot B_i$ and in the leaves bids 0, so we omit the normalization of $\frac{\St_i(v)}{\St_{max}}$ when referring to bowtie games. 
% \shtodo{If we're doing general games, then the definition of bowtie game should wait for later.}

\subsection{A semi-algorithm to find payoffs in general recurrent games}
Consider a recurrent game $\G$ with initial vertex $v_0$, and fix budget strategies $\budget_1,\budget_2$ for the players. 
Given an initial budget $c$ for \PO, the strategies uniquely determine the next vine $\theta(c)$.
%a vine $\theta(c)$ taken from $v_0$ until the next milestone. 
\begin{definition}
\label{def:budget update function}
The \emph{budget-update function} is $\xb:[0,1]\to [0,1]$ such that $\xb(c)$ is the budget at the next milestone when starting at $v_0$ with budget $c$.
%We denote the budget in the next milestone as $\xb(c)$, and refer to $\xb:[0,1]\to [0,1]$ as the 
\end{definition}

Observe that from an initial budget $c_0$, the sequence of budgets $c_0,c_1,\ldots $ in the milestones of $\play(c_0,\budget_1,\budget_2)$ satisfies $c_i=\xb^i(c_0)$, where $\xb^i$ is the composition of $\xb$ with itself $i$ times. Thus, the entire behavior of the play is determined by the dynamics induced by $\xb$. 
In the remainder of this section we study this dynamics. The first step is to show that $\xb$ is piecewise-linear. 
\begin{example}
	\label{xmp:bowtie piecewise linear}
	\stam{
	Recall that a \emph{bowtie game} is a recurrent game of depth $1$. Since budget strategies always choose the same neighbor upon winning a bid, we can assume that a bowtie game has vertices $\{v_0,v_1,v_2\}$ where $v_0$ is the root and $v_1$ (resp. $v_2$) is the vertex chosen if \PO (resp. \PT) wins the bidding. 
	}
	
A \emph{repeated bidding game} is a recurrent game with three vertices: a root $v_0$ with two children $v_1$ and $v_2$.
\stam{ When \PLi wins the bidding at $v_0$, they proceed to $v_i$, for $i \in \set{1,2}$.}
	
	We derive an explicit expression for $\xb$ in such games. 
 First observe that the root $v_0$ is the only possible vertex with $\St(v_0)>0$. Thus, $\frac{St_i(v_0)}{\St_{max}}=1$. Therefore,  
 $\budget_i$ bids $\alpha_i x$ at $v_0$ (where $x$ is the budget).
    then from initial budget $x$, \PO wins the bidding if and only if $\alpha_1 x\ge \alpha_2 (1-x)$, equivalently $x\ge \frac{\alpha_2}{\alpha_1+\alpha_2}$. Then, the next budget if $\PO$ wins is $x-\alpha_1 x=(1-\alpha_1)x$, and otherwise $x+\alpha_2(1-x)=(1-\alpha_2)x + \alpha_2$. 
	We thus have the following piecewise-linear form.
	$
	\xb(x) = \begin{cases}
		(1 - \alpha_2) x + \alpha_2 & \text{ if } x < \frac{\alpha_2}{\alpha_1 + \alpha_2} \\
		(1 - \alpha_1) x & \text{ if } x \ge \frac{\alpha_2}{\alpha_1 + \alpha_2} \\
	\end{cases}
	$
\end{example}
Example~\ref{xmp:bowtie piecewise linear} can be extended by induction to general recurrent games 
to obtain the following. % (see the proof in App.~\ref{app:budget update is piecewise linear}).
%short , as follows. 
\begin{lemma}
\label{lem:budget update is piecewise linear}
    For every recurrent game $\G$ the budget-update function $\xb$ is piecewise-linear.
\end{lemma}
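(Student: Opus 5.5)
We prove a stronger statement by induction on the height of the tree: for every vertex $v$ at depth $d$, let $\xb_v(x)$ denote \PO's budget when the token first reaches the next milestone, starting from $v$ with \PO's budget $x$. We show that $\xb_v$ is piecewise-linear with finitely many pieces. Along the way we also show that the map $\theta_v(x)$, giving the remaining vine from $v$, is piecewise constant on the same (or a refining) partition. The lemma is the case $v=v_0$.

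\emph{Base case.} $v$ is a leaf. The only successor is $v_0$, and both strengths at a leaf are $0$ (as in Ex.~\ref{xmp:bowtie piecewise linear}), so both bids are $0$. Hence $\xb_v(x)=x$, which is linear. If a leaf had positive strength, the bid would be linear in the budget and the argument of the inductive step would apply verbatim.

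\emph{Inductive step.} Let $v$ be an internal vertex. \PO bids $b_1(x)=\alpha_1\frac{\St_1(v)}{\St^1_{max}}x=:a_1x$ and \PT bids $b_2(x)=a_2(1-x)$, with constants $a_1,a_2\ge 0$. \PO wins iff $a_1x\ge a_2(1-x)$. When $a_1+a_2>0$ this is the half-line $x\ge t_v:=\frac{a_2}{a_1+a_2}$; when $a_1=a_2=0$, \PO always wins by tie-breaking. So $[0,1]$ splits into at most two intervals, $[0,t_v)$ and $[t_v,1]$. On the first, the token moves to $u_2$ with budget $g_2(x)=x+a_2(1-x)$. On the second, it moves to $u_1$ with budget $g_1(x)=(1-a_1)x$. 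Both are affine. Therefore
\[
\xb_v(x)=\begin{cases}\xb_{u_2}(g_2(x)) & x<t_v,\\ \xb_{u_1}(g_1(x)) & x\ge t_v.\end{cases}
\]
Composing a piecewise-linear function with finitely many pieces with an affine map yields a piecewise-linear function with finitely many pieces. The preimage of each breakpoint under an affine map is a single point, or everything or nothing when the map is constant. We then restrict this composition to an interval, and glue two such functions at $t_v$. The result has at most $|\mathrm{pieces}(\xb_{u_1})|+|\mathrm{pieces}(\xb_{u_2})|+1$ pieces. The same bookkeeping shows that the vine chosen is constant on each piece.

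The only care needed is making sure the pieces stay intervals and that the budgets stay in $[0,1]$. We also need $g_1,g_2$ to map into $[0,1]$, which follows from $a_1,a_2\in[0,1]$ since $\alpha_i<1$ and $\St_i(v)\le\St^i_{max}$. Degenerate cases also need handling: $a_1=0$ or $a_2=0$ make $t_v\in\{0,1\}$, and ties are broken in favor of \PO. None of these is a real obstacle. The main subtlety is that $\xb$ need not be continuous at the thresholds, so "piecewise-linear" must be understood as finitely many intervals (half-open as dictated by tie-breaking), each carrying an affine map. Since the number of pieces is bounded by roughly the number of leaves, this also gives an explicit description of $\xb$ usable by later results.
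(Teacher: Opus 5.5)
Your proof is correct and is essentially the paper's argument: a bottom-up induction on the tree with $\xb_\ell(x)=x$ at the leaves, and at each internal vertex a split at the threshold $\frac{a_2}{a_1+a_2}$, after which you compose the children's functions with the affine updates $(1-a_1)x$ and $(1-a_2)x+a_2$. Your extra bookkeeping (finitely many pieces, degenerate thresholds, budgets staying in $[0,1]$, and treating $u_1=u_2$ as still changing budgets, despite the paper's parenthetical) is a welcome tightening; one small caveat is that when $u_1=u_2$ the piece count can double, so your closing aside about roughly the number of leaves fails in general, but finiteness, which is all the lemma needs, is unaffected.
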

\begin{proof}
    Consider the tree representation of $\G$.
    We construct $\xb$ in a bottom-up fashion, starting from the leaves. Let $\St_{\max}$ be the maximal strength of a vertex in the game. 
    For each leaf $u$, we set the function $\xb_u(x)=x$. Indeed, no bidding takes place in the leaves, and the budget when returning to $v_0$ does not change.

    Next, assume $\xb_{u_1}$ and $\xb_{u_2}$ are piecewise linear functions already defined for subtrees of a subtree rooted at $u$, we define $\xb_u$ similarly to~\cref{xmp:bowtie piecewise linear}, as follows. Assume $u_1$ (resp. $u_2$) is the maximal potential child of $u$ for \PO (resp. \PT) (if both players share the potential maximizer, then no bidding take place and one of the subtrees can be removed).
    Then, according to $\budget_i$, \PLi bids $\alpha_i\frac{\St_{i}(u)}{\St_{\max}}x$ given budget $x$. Denote $\xi_i=\alpha_i\frac{\St_{i}(u)}{\St_{\max}}$
    Thus, \PO wins the bidding if and only if 
    $\xi_1 x\ge \xi_2(1-x)$, equivalently if 
    $x\ge \frac{\xi_2}{\xi_1+\xi_2}$. The budget transferred to $u_1$ if \PO wins is then $(1-\xi_1)x$, and otherwise $(1-\xi_2)x+\xi_2$ is transferred to $u_2$.

    It follows that the next budget at $v_0$ (starting from $u$ with budget $x$) is
    \[
    \xb_u(x)=\begin{cases}
		\xb_{u_2}((1 - \xi_2) x + \xi_2) & \text{ if } x < \frac{\xi_2}{\xi_1 + \xi_2} \\
		\xb_{u_1}((1 - \xi_1) x) & \text{ if } x \ge \frac{\xi_2}{\xi_1 + \xi_2} \\
	\end{cases}
    \]    
    $\xb_u$ is then piecewise linear as a composition (by branches) of piecewise-linear functions.

    We can then conclude the claim with $\xb\equiv \xb_{v_0}$.
\end{proof}

Intuitively, each linear ``branch'' of $\xb$ corresponds to an interval of budgets for which $\budget_1,\budget_2$ induce the same vine to the next milestone. By extension, notice that $\xb^i$ is also piecewise-linear for all $i$ (as a composition of piecewise-linear functions), and each branch of $\xb^i$ corresponds to a specific sequence of $i$ vines to be seen from an interval of budgets. 
We proceed to make this intuition formal. Define the set $\disc{}\subseteq [0,1]$ of \emph{bifurcation points} of $\xb$ to be the set of points where $\xb$ changes its behavior from one linear branch to another (e.g., in~\cref{xmp:bowtie piecewise linear} we have $\disc{}=\{\frac{\alpha_2}{\alpha_1+\alpha_2}\}$). We then extend this to $\xb^i$ for all $i\in \bbN$ by defining 
\[\disc{1}=\disc\quad\text{ and }\quad\disc{i}=\disc{i-1}\cup \{x\mid \xb(x)\in \disc{i-1}\}\]
By definition we have $\disc{i}\subseteq \disc{i+1}$. Thus, we can define the limit of this sequence as $\disc{\infty}=\bigcup_{i\in \bbN}\disc{i}$. Observe that $\disc{i}$ indeed captures the set of points where $\xb^i$ may change its linear branch. 
%The proof of the following lemma can be found in App.~\ref{app:vine fixed by linear branch}.

\begin{lemma}
\label{lem:vine fixed by linear branch}
Consider an interval $I\subseteq [0,1]$ such that $I\cap \disc{}=\emptyset$, then there is a vine $\theta$ such that for every $x\in I$, the vine taken first in $\path(x,\budget_1,\budget_2)$ is $\theta$.
\end{lemma}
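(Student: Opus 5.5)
The plan is to reuse the bottom-up construction from the proof of Lem.~\ref{lem:budget update is piecewise linear} and prove a slightly stronger statement by induction on the height of a vertex $u$ in the tree. For each $u$ we define the set $\disc{}_u$ of bifurcation points of $\xb_u$. The claim to prove is: for every interval $J \subseteq [0,1]$ with $J \cap \disc{}_u = \emptyset$, the partial vine from $u$ to a leaf is the same for all starting budgets $x \in J$. Taking $u = v_0$ and $\disc{}=\disc{}_{v_0}$ gives the lemma.

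The base case is a leaf: there is a single (empty) continuation, so the claim is trivial. For the inductive step at an inner vertex $u$ with children $u_1,u_2$ as in the proof of Lem.~\ref{lem:budget update is piecewise linear}, we argue in two steps.

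First, the threshold $t_u=\frac{\xi_2}{\xi_1+\xi_2}$ belongs to $\disc{}_u$. Consequently $J$ lies entirely on one side of $t_u$, so the same player wins the bidding at $u$ for all $x\in J$, and the token moves to the same child, say $u_1$.

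Second, the budget passed to $u_1$ is an affine map $g(x)=(1-\xi_1)x$, or $(1-\xi_2)x+\xi_2$ on the other branch. This map is continuous and monotone, so $g(J)$ is an interval. We then need $g(J)\cap \disc{}_{u_1}=\emptyset$; the induction hypothesis then fixes the remainder of the vine.

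This is where the main obstacle lies: it requires identifying $\disc{}_u$ with $\{t_u\}\cup g_1^{-1}(\disc{}_{u_1})\cup g_2^{-1}(\disc{}_{u_2})$ (restricted to the appropriate sides). In other words, we must make precise that ``the points where $\xb$ changes its linear branch'' means exactly the thresholds encountered along the construction, and not only points where the composed formula actually changes slope.

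If two adjacent branches happen to coincide as affine functions, the literal set of slope changes could omit a genuine threshold. In that case the vine could still differ across the threshold even though $\xb$ looks linear. To handle this, I would fix the convention that bifurcation points are the thresholds produced by the recursive definition of $\xb_u$, i.e., the case splits in the proof of Lem.~\ref{lem:budget update is piecewise linear}, pulled back through the affine maps. With that convention the identification above holds by construction, and the induction goes through directly. I would state this convention explicitly, noting it agrees with the paper's example where $\disc{}=\{\frac{\alpha_2}{\alpha_1+\alpha_2}\}$. Finally, I would remark that tie-breaking (\PO wins at $x=t_u$) puts $t_u$ itself in the right-hand piece, which is harmless because $t_u\notin J$.
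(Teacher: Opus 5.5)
Your proof is correct and is essentially the paper's argument. The paper also uses two facts: the threshold $\frac{\xi_2}{\xi_1+\xi_2}$ at each vertex is a bifurcation point, so every $x\in I$ wins or loses the same bidding; and the bifurcations of $\xb_u$ on each side are the pullbacks of the child's bifurcations under the affine budget map. It then inducts down to the leaves, and your induction on subtree height just makes that precise. Your convention that bifurcation points are the recursive case-split thresholds is what the paper assumes when it says bifurcations ``correspond to the budget thresholds at every node.'' Stating it is a worthwhile clarification, because if two adjacent affine pieces coincided, the literal definition would miss a threshold across which the vine changes.
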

% \stam{%short
 \begin{proof}
 	Intuitively, notice that the set of bifurcations in $\xb$ corresponds to the budget thresholds at every node in the tree of $\G$. It follows that all $x\in I$ induce the same vine in the tree. We proceed with the details, relying on the notation $\xb_u$ from the proof of~\cref{lem:budget update is piecewise linear}.

     Consider $x,y\in I$. Starting from $v_0$, since $I\cap \disc{}=\emptyset$, it follows that both $x$ and $y$ take the same bifurcation of $\xb_{v_0}$. This leads the play to either $v_1$ or $v_2$ (the children of $v_0$). Assume the game proceeds to $v_1$ (the case of $v_2$ is analogous). Then the budget with which $v_1$ is reached is $(1-\xi_1)x$ and $(1-\xi_1)y$ respectively. 
     Recall that in this branch we have $\xb_{v_0}(x)=\xb_{v_1}(1-\xi_1)x$ (and similarly for $y$). It follows that the bifurcations of $\xb_{v_0}$ arising in this branch are at all $b$ such that $(1-\xi_1)b$ is a bifurcation of $\xb_{v_1}$. 
     This means that $(1-\xi_1)x$ and $(1-\xi_1)y$ are again in a single branch of $\xb_{v_1}$, so we can continue the same reasoning by induction down to the leaves.
 \end{proof}
% }%of short

We now show that if $\disc{\infty}$ is finite, then $\path(c,\budget_1,\budget_2)$ is ultimately periodic. Intuitively, this is because if an interval $I$ with $I\cap \disc{\infty}=\empty$ is visited twice, we can show that due to the piecewise linearity, it is visited periodically. Then, Lem.~\ref{lem:vine fixed by linear branch} gives us the result. 
%See App.~\ref{app:budget periodic} for the proof and a depiction.
\begin{theorem}
\label{thm:disc finite then lasso}
Let $\path(c_0,\budget_1,\budget_2)=v_0,v_1,\ldots$. If $\disc{\infty}$ is finite, then there exists $N_0,k\in \bbN$ such that for every $n>N_0$ it holds that $v_n=v_{n+k}$.
\end{theorem}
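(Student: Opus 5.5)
The plan is to exploit that, with $\disc{\infty}$ finite, the points of $\disc{\infty}$ cut $[0,1]$ into finitely many pieces. Let $\disc{\infty}=\{p_1<\cdots<p_m\}$ and partition $[0,1]$ into the finitely many \emph{atoms}: the singletons $\{p_j\}$ and the open intervals between consecutive points (together with the end segments). Call this finite family $\mathcal{A}$.

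\textbf{Step 1: $\xb$ maps atoms into atoms.} Fix an open atom $I\in\mathcal{A}$. Since $\disc{}\subseteq\disc{\infty}$, we have $I\cap\disc{}=\emptyset$. Hence $\xb$ is linear, and therefore continuous, on $I$, so $\xb(I)$ is an interval or a point. Suppose $\xb(I)$ contained a point of $\disc{\infty}$, say $\xb(x)\in\disc{i}$ for some $x\in I$. Then $x\in\disc{i+1}\subseteq\disc{\infty}$ by the definition of $\disc{i+1}$, contradicting $x\in I$. So $\xb(I)$ is a connected set avoiding $\disc{\infty}$, and it lies inside a single open atom $J$. For singleton atoms the claim is trivial. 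This yields a map $F:\mathcal{A}\to\mathcal{A}$ with $\xb(A)\subseteq F(A)$.

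\textbf{Step 2: the vines are eventually periodic.} Let $A_0\ni c_0$ and $A_{i+1}=F(A_i)$. Since $c_i=\xb^i(c_0)$, Step 1 gives $c_i\in A_i$ by induction. Because $\mathcal{A}$ is finite and $F$ is deterministic, the sequence $A_0,A_1,\ldots$ is ultimately periodic: there are $M,p$ with $A_{i+p}=A_i$ for all $i\ge M$. By Lem.~\ref{lem:vine fixed by linear branch}, each open atom, which is disjoint from $\disc{}$, determines a unique next vine. A singleton atom trivially determines its next vine. Hence the vine played from milestone $i$ depends only on $A_i$, and the sequence of vines is ultimately periodic with period $p$ from milestone $M$ on.

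\textbf{Step 3: translate to vertices.} The path is the concatenation of the vines. Set $N_0$ to be the index of the $M$-th milestone and $k$ to be the total length of the $p$ vines $\theta(A_M),\ldots,\theta(A_{M+p-1})$. Then $v_n=v_{n+k}$ for all $n>N_0$.

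\textbf{Main obstacle.} The delicate step is Step 1: showing that an atom cannot be stretched across a bifurcation point. This is exactly where the closure of $\disc{\infty}$ under $\xb$-preimages is used. One must also check that the linear branches of $\xb$ are well behaved near endpoints, in particular the half-open convention $x\ge\frac{\xi_2}{\xi_1+\xi_2}$. That is why the bifurcation points themselves are kept as separate singleton atoms.
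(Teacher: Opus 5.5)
Your proof is correct and follows essentially the paper's route. The paper likewise shows that $\xb$ maps each piece of $[0,1]\setminus\disc{\infty}$ into a single such piece, uses finiteness of $\disc{\infty}$ to obtain an eventually periodic sequence of pieces, and then applies Lem.~\ref{lem:vine fixed by linear branch} to carry the periodicity over to vines. Your version is tighter in two places: you get closure of $\disc{\infty}$ under $\xb$-preimages directly from its definition, without the stabilization index $N_1$, and your singleton atoms handle a budget $c_n$ that lands exactly on a point of $\disc{\infty}$, where the paper's maximal interval $I_n$ is undefined.
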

\begin{proof}
	Since $\disc{\infty}$ is finite, let $N_1\in \bbN$ such that for all $n\ge N_1$ we have $\disc{n}=\disc{n+1}=\disc{\infty}$. 
	We then have $\disc{n}=\disc{n+1}=\disc{n}\cup \{x\mid \xb(x)\in \disc{n}\}$, and in particular $\{x\mid \xb(x)\in \disc{n}\}\subseteq \disc{n}$. In the contrapositive, we get that if $x\notin \disc{n}$, then $\xb(x)\notin \disc{n}$. 
	Therefore, for every interval $I\subseteq [0,1]$ such that $I\cap \disc{\infty}=\emptyset$ we have that $\xb(I)\cap \disc{\infty}=\emptyset$. 
	
	For every $i\in \bbN$, let $c_{i}=\xb^{i}(c_0)$ be the budget after $i$ milestones. For every $n\ge N_1$, let $I_n\subseteq [0,1]$ be the maximal interval (with respect to containment) such that $c_n\in I_n$ and $I_n\cap \disc{\infty}=\emptyset$ (i.e., $I_n$ is the ``entire branch'' of $\xb^n$ where $c_n$ lies). 
	Since $\disc{\infty}$ is finite, there exist $N_0\ge N_1$ and $k>0$ such that $I_{N_0}=I_{N_0+k}$. 
	Then, we claim that for every $n\ge N_0$ it holds that $I_{n}=I_{n+k}$, i.e., that the intervals visited after $N_0$ are visited periodically. 
	Indeed, since $c_{n}\in I_{n}$, then $c_{n+1}=\xb(c_{n})\in \xb(I_{n})$, but $\xb$ is linear when restricted to $I_{n}$, and therefore $\xb(I_{n})$ is an interval, and moreover $\xb(I_{n})\cap \disc{\infty}=\emptyset$ (since $n\ge N_1$ and therefore no additional bifurcations are introduced). 
 It follows that $\xb(I_{n})\subseteq I_{n+1}$. In particular, $\xb(I_{N_0+k})\subseteq I_{N_0+k+1}$ but also $\xb(I_{N_0+k})=\xb(I_{N_0})\subseteq I_{N_0+1}$. We depict this idea in~\cref{fig:intervals mapping}.
 Since the intervals under consideration are maximal, it follows that $I_{N_0+1}=I_{N_0+k+1}$, and we can proceed by induction to conclude the claim.

	Finally, by Lem.~\ref{lem:vine fixed by linear branch} we have that the ultimately periodic sequence of intervals induces an ultimately periodic sequence of vines visited from $N_0$, which concludes the proof.
\end{proof}
\begin{figure}[h]
    \begin{center}
        \begin{tikzpicture}[yscale=0.55]
            % Draw axes
            \draw[->] (0,0) -- (6,0) node[right] {$x$};
            \draw[->] (0,0) -- (0,5) node[left] {$\xb^{N_0}(x)$};
            
            % Draw the piecewise constant function
            \draw[thick, orange] (0,4 * 3/10) -- (1,4 * 3/10);
            \draw[thick, orange] (1,4 * 7/10) -- (2,4 * 7/10);
            \draw[thick, orange] (2,4 * 5/10) -- (3,4 * 5/10);
            \draw[thick, orange] (3,4 * 8/10) -- (4,4 * 8/10);
            \draw[thick, orange] (4,4 * 2/10) -- (5,4 * 2/10);
        
            % Vertical lines at steps
            \foreach \x in {1,2,3,4,5} {
                \draw[dotted] (\x,0) -- (\x,5);
            }
            
            % Labels for x-axis
            \foreach \x in {0, 0.2, 0.4, 0.6, 0.8, 1} {
                \draw (\x * 5, 0.2) -- (\x * 5, -0.2) node[below] {\x};
            }
            
            \draw[->, blue] 
            (2.5, 4 * 5/10) .. controls (3.5, 4 * 5/10 + 0.5) and (4,  4 * 5/10 + 0.5) .. (4.5, 4 * 2/10)
            node[pos=0.5, yshift=1ex, font=\scriptsize] {$\xb$};
        
            \draw[->, blue] 
            (4.55, 4 * 2/10) .. controls (4.2, 4 * 8/10 + 0.5) and (3.8,  4 * 8/10 + 0.5) .. (3.5, 4 * 8/10)
            node[pos=0.5, yshift=2ex, font=\scriptsize] {$\xb$};
        
            \draw[->, blue] 
            (3.45, 4 * 8/10) .. controls (3, 4 * 8/10 + 0.5) and (2,  4 * 8/10 + 0.5) .. (1.5, 4 * 7/10)
            node[pos=0.5, yshift=1ex, font=\scriptsize] {$\xb$};
        
            % return to the original point (2.5, 4 * 5/10)
            \draw[->, blue]
            (1.55, 4 * 7/10) .. controls (1.8, 4 * 7/10 + 0.3) and (2.3,  4 * 7/10 + 0.3) .. (2.45, 4 * 5/10)
            node[pos=0.5, yshift=1ex, font=\scriptsize] {$\xb$};   
        \end{tikzpicture}
    \end{center}
    \caption{ $\xb^{N_0}$ maps ``whole intervals'' to ``whole intervals'', and thus jumps between the intervals defined by $\disc{\infty}$ until returning to some interval, which is then mapped again into the cycle regardless of the specific budget with which the interval is reached.}
    \label{fig:intervals mapping}
\end{figure}

The proof of
%short %~\cref{thm:disc finite then lasso} 
Thm.~\ref{thm:disc finite then lasso} 
suggests a semi-algorithm for computing the mean payoff of the players in the play: given an initial budget $B_1$ keep track of $\disc{i}$ until it converges. If it does, keep track of the intervals until a period is reached. From there, evaluating the mean payoff amounts to evaluating the weights in the period of vines. 

\subsection{The generated path is ultimately periodic in repeated bidding games}
Analyzing the dynamics governed by piecewise-linear functions is notoriously difficult~\cite{pettit1997analysis}, and is typically handled only for simple cases, e.g.,~\cite{freire1998bifurcation,LN12}).
In this section, we show that $\disc{\infty}$ is finite for repeated bidding games (c.f.,~\cref{xmp:bowtie piecewise linear}), thus showing that the mean-payoff in this case  is computable.
\stam{
Analyzing the dynamics governed by piecewise-linear functions is notoriously difficult~\cite{pettit1997analysis}, and is typically handled only for simple cases, e.g.,~\cite{freire1998bifurcation,gaivao2024rotation}).
However, we are able to show that $\disc{\infty}$ is finite for repeated bidding games (c.f.,~\cref{xmp:bowtie piecewise linear}), thus showing that the mean-payoff in this case  is computable. We prove this in the remainder of this section. 
}

Recall from~\cref{xmp:bowtie piecewise linear} the form of $\xb(x)$ in a repeated bidding game. Specifically, we have that $\disc{1}=\{\frac{\alpha_2}{\alpha_1+\alpha_2}\}$. 
For every $i\in \bbN$ denote $A_i=\{x\mid \xb(x)\in \disc{i-1}\}$ (i.e., $A_i=\xb^{-1}(\disc{i-1})$), and note that $\disc{i}=\disc{i-1}\cup A_i$.
In order to prove that $\disc{\infty}$ is finite, it suffices to prove that $A_i=\emptyset$ for all $i\ge N_0$ for some $N_0$, or equivalently that $A_{N_0}=\emptyset$ for some $N_0$ (%after which 
so the claim follows by induction).

Consider the inverse relation $\xb^{-1}$ and some $x\in [0,1]$. In order to compute $\xb^{-1}(x)$ we use the explicit formulation of $\xb$: we see that $y\in \xb^{-1}(x)$ if either (1) $0\le y<\frac{\alpha_2}{\alpha_1+\alpha_2}$ and $x=(1-\alpha_2)y +\alpha_2$ or (2) $1\ge y\ge\frac{\alpha_2}{\alpha_1+\alpha_2}$ and $x=(1-\alpha_1)y$. 
%Rearranging the conditions, this is equivalent to (1) $y=\frac{x-\alpha_2}{1-\alpha_2}$ and $\alpha_2\le x<\frac{\alpha_2}{\alpha_1+\alpha_2}(1-\alpha_2)+\alpha_2$, or (2) $y=\frac{x}{1-\alpha_1}$ and $\frac{\alpha_2}{\alpha_1+\alpha_2}(1-\alpha_1)\le x\le 1-\alpha_1$.
Rearranging the above, we define the ``domains'' of the conditions as 
$I_1 = [ \alpha_2, \frac{\alpha_2}{\alpha_1 + \alpha_2} + \frac{\alpha_1 \alpha_2}{\alpha_1 + \alpha_2} )$ and $
        I_2 = [\frac{\alpha_2}{\alpha_1 + \alpha_2} - \frac{\alpha_1 \alpha_2}{\alpha_1 + \alpha_2}, 1 - \alpha_1]$

% \[
%         I_1 = \Bigg[ \alpha_2, \frac{\alpha_2}{\alpha_1 + \alpha_2} + \frac{\alpha_1 \alpha_2}{\alpha_1 + \alpha_2} \Bigg)\text{ and }
%         I_2 = \Bigg[\frac{\alpha_2}{\alpha_1 + \alpha_2} - \frac{\alpha_1 \alpha_2}{\alpha_1 + \alpha_2}, 1 - \alpha_1\Bigg]
%     \]
    We can then capture $\xb^{-1}$ using two functions:
\[R_1(x)=\frac{x-\alpha_2}{1-\alpha_2} \text{ if } x\in I_1 \text{ and } R_2(x)=\frac{x}{1-\alpha_1} \text{ if } x\in I_2\]
Indeed, we now have that $\xb^{-1}(x)=\{R_1(x), R_2(x)\}$ (by ignoring undefined values), and so $A_i=R_1(\disc{i-1})\cup R_2(\disc{i-1})$.

We now present a simple technical lemma from the theory of attractors in linear dynamical systems, % (see the proof in App.~\ref{app:distance from fixed point}), 
which we use to analyze the behavior of $R_1$ and $R_2$.

\begin{lemma}
    \label{lem:distance from fixed point}
    Let $y(x) = ax + b$ be a linear function and let $x_0\in \R$ be the (unique) fixed point of $y$, i.e. $y(x_0) = x_0$. Then, $\forall x\in \R: y(x) - x_0 = a(x - x_0)$. In particular, if $a > 1$ and $x\neq x_0$, then $\lim_{n\rightarrow \infty} |y^n(x) - x_0| = \infty$.
\end{lemma}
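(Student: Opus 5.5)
The plan is a direct computation that exploits the fixed-point equation; nothing from the earlier sections is needed. I will first use $y(x_0)=x_0$, i.e.\ $ax_0+b=x_0$, to write $b=x_0-ax_0$. This requires $a\neq 1$, which is implicit in the statement's phrase ``the (unique) fixed point''. I assume this throughout, and in the applications to $R_1$ and $R_2$ we indeed have $a=\frac{1}{1-\alpha_i}>1$. Substituting, for every $x\in\R$,
\[
y(x)-x_0 = ax+b-x_0 = ax + x_0 - ax_0 - x_0 = a(x-x_0),
\]
which is the first claim.

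For the second claim, I will iterate this identity by induction on $n$. The base case $n=1$ is the identity itself. For the inductive step, apply the identity to the point $y^{n}(x)$:
\[
y^{n+1}(x)-x_0 = y\big(y^n(x)\big)-x_0 = a\big(y^n(x)-x_0\big) = a^{n+1}(x-x_0).
\]
Taking absolute values gives $|y^n(x)-x_0| = a^n|x-x_0|$. When $a>1$ and $x\neq x_0$ we have $|x-x_0|>0$ and $a^n\to\infty$, so the limit is $\infty$.

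There is no real obstacle. The only point needing care is the implicit assumption $a\neq 1$, which guarantees that the fixed point $x_0=\frac{b}{1-a}$ exists and is unique. I will state this explicitly so that the later application to $R_1$ and $R_2$, both of which have slope greater than $1$, is justified.
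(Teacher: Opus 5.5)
Your proof is correct and matches the paper's. The first identity is the same one-line computation; the paper writes it as $y(x)-y(x_0)$, which amounts to your substitution $b=x_0-ax_0$. For the iterates, the paper applies that identity to $y^n$, viewed as a linear map of slope $a^n$ fixing $x_0$, which is exactly what your induction establishes. (The computation only uses $y(x_0)=x_0$, so the caveat $a\neq 1$ is needed only for the word ``unique'', not for the argument.)
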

\begin{proof}
    First, we have that 
    \[y(x)-x_0=y(x)-y(x_0)=ax+b-ax_0-b=a(x-x_0).\]
    Next, observe that $x_0$ is also a fixed point of $y^n$, where $y^n(x)=a^nx+b'$ where $b'$ is some constant depending on $a,b,n$. Thus, if $a>1$, we have $y^n(x)-x_0=a^n(x-x_0)$, so 
    \[\lim_{n\to \infty} |y^n(x) - x_0|=\lim_{n\to \infty}a^n|x-x_0| = \infty.\]
\end{proof}

We are now ready to prove that $\disc{\infty}$ is finite.
\begin{theorem}
\label{thm:bowtie finite bif}
    In the notations above, for a repeated bidding game, $\disc{\infty}$ is finite. Moreover, 
    there exists $N_0=O\bigl(\max{\frac{\log{\frac{\alpha_i}{\alpha_i + \alpha_j}}}{\log{1 - \alpha_j}}}\bigr)$ for $i\neq j \in \{1,2\}$ such that $\disc{N_0}=\disc{\infty}$.
\end{theorem}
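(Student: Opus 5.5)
We will show that every backward orbit of the bifurcation point $p=\frac{\alpha_2}{\alpha_1+\alpha_2}$ under the inverse branches $R_1,R_2$ has bounded length. Since $A_i=R_1(\disc{i-1})\cup R_2(\disc{i-1})$, this forces $A_{N_0}=\emptyset$ for suitable $N_0$, and then $\disc{N_0}=\disc{\infty}$ by the induction remark preceding Lem.~\ref{lem:distance from fixed point}. Concretely, each element of $\disc{i}$ has the form $R_{j_1}\circ\cdots\circ R_{j_m}(p)$ with $m<i$, where every intermediate point lies in the domain of the next map applied. Hence it suffices to bound the length $m$ of any such admissible word. There are at most $2^m$ words of length $m$, so finiteness follows once $m$ is bounded.

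\textbf{Step 1: the two branches are expanding.} $R_1(x)=\frac{x-\alpha_2}{1-\alpha_2}$ has slope $\frac{1}{1-\alpha_2}>1$ and fixed point $1$. $R_2(x)=\frac{x}{1-\alpha_1}$ has slope $\frac{1}{1-\alpha_1}>1$ and fixed point $0$. By Lem.~\ref{lem:distance from fixed point}:
\begin{itemize}
\item $1-R_1(x)=\frac{1-x}{1-\alpha_2}$, so $R_1$ pushes points away from $1$, i.e.\ downward;
\item $R_2(x)=\frac{x}{1-\alpha_1}$, so $R_2$ pushes points away from $0$, i.e.\ upward.
\end{itemize}
The domains also split according to $p$. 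We have $I_2\subseteq[0,1-\alpha_1]$, and the image $R_2(I_2)$ lies in $[p,1]$. The image $R_1(I_1)$ lies in $[0,p)$.

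\textbf{Step 2: a point leaving $[0,p)$ via $R_2$ cannot reenter the domains far.} I plan to track which side of $p$ the current point is on. If $y=R_1(x)\in[0,p)$, then the next admissible preimage is either $R_1(y)$, which requires $y\ge\alpha_2$, or $R_2(y)$, which requires $y\le 1-\alpha_1$.

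Consecutive applications of $R_1$ multiply the distance to $1$ by $(1-\alpha_2)^{-1}$. Starting from distance $1-p=\frac{\alpha_1}{\alpha_1+\alpha_2}$, the point leaves $I_1$, i.e.\ drops below $\alpha_2$, after at most $k_1=O\bigl(\log\frac{\alpha_1}{\alpha_1+\alpha_2}/\log(1-\alpha_2)\bigr)$ steps. Symmetrically, consecutive applications of $R_2$ starting below $p$ multiply the point by $(1-\alpha_1)^{-1}$. The point then exceeds $1-\alpha_1$ after at most $k_2=O\bigl(\log\frac{\alpha_2}{\alpha_1+\alpha_2}/\log(1-\alpha_1)\bigr)$ steps. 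This matches the claimed form of $N_0$.

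\textbf{Step 3 (the main obstacle): ruling out alternation.} Bounding runs of a single branch is not enough; I must show that an admissible word cannot alternate between $R_1$ and $R_2$ indefinitely. I would first try a monotone potential argument. Applying $R_2$ to a point in $[0,p)$ lands in $[p,1]$. From there only $R_2$ is applicable, and only if the point is still $\le 1-\alpha_1$. So after a switch from $R_1$ to $R_2$ the word continues with $R_2$ only. Applying $R_1$ requires $x\in I_1$, which lies mostly above $\alpha_2$, and lands in $[0,p)$; from there both branches are possible.

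The delicate case is $R_1$ followed by $R_2$ and then more of $R_2$. Once $R_2$ is applied to a point $\ge p$, the point only increases and is never again in $[0,p)$. It also never re-enters $I_1\cap[0,p)$ in a way that lets $R_1$ bring it back. I therefore expect every admissible word to have the shape $R_2^{a}R_1^{b}$, possibly with one further bounded prefix. Note that the orbit starts at $p$ itself, which is the boundary of $I_1$ and $I_2$, so the first step needs a separate check. If this shape lemma holds, then $m\le k_1+k_2+O(1)$. Verifying the shape lemma carefully from the exact endpoints of $I_1$ and $I_2$ relative to $p$ is the step I expect to need the most care. The fallback is a potential argument: measure the distance to the nearest endpoint in $\{0,1\}$ and show it grows geometrically along any admissible word until the word exits $[0,1]$.
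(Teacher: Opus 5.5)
\textbf{Gap in Step 3.} The whole proof rests on Step 3, and the argument you sketch there misreads the geometry. The point $p$ separates the \emph{images} $R_1(I_1)=[0,p)$ and $R_2(I_2)=[p,1]$. It does not separate the \emph{domains}: $I_1=[\alpha_2,p+\chi)$ and $I_2=[p-\chi,1-\alpha_1]$, with $\chi=\frac{\alpha_1\alpha_2}{\alpha_1+\alpha_2}=\alpha_1 p$, overlap around $p$, and $p$ is not ``the boundary of $I_1$ and $I_2$''.

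Consequently, your claim that after applying $R_2$ to a point of $[0,p)$ ``only $R_2$ is applicable'' is false. Take $y\in[p-\chi,\,p(1-\alpha_1^2))$, which lies in $I_2\cap R_1(I_1)$. Then $R_2(y)\in[p,p+\chi)$, and this lies in $I_1$ whenever $p\ge\alpha_2$ (i.e.\ $\alpha_1+\alpha_2\le 1$). So $R_1$ applies again.

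Alternation can even continue forever. Let $D=\alpha_1+\alpha_2-\alpha_1\alpha_2$ and $x^*=\frac{(1-\alpha_1)\alpha_2}{D}$. One checks that $x^*\in I_2$, $R_2(x^*)=\frac{\alpha_2}{D}\in I_1$, and $R_1(R_2(x^*))=x^*$; this is an attracting $2$-cycle of $\xb$. Hence no argument that uses only the expansion of $R_1,R_2$ and their domains can bound admissible word lengths. This includes your fallback potential, which merely oscillates along this cycle and does not grow. Any correct argument must use the specific position of $p$.

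\textbf{Missing idea.} What you need is the pair of explicit inequalities the paper verifies:
\begin{itemize}
\item $R_1(p)<p-\chi=\min I_2$, which is equivalent to $\alpha_1\alpha_2>0$;
\item $R_2(p)>p+\chi=\sup I_1$, which is equivalent to $\alpha_1^2>0$.
\end{itemize}
So the very first inverse step from $p$ already jumps out of the overlap of the two domains. Combine this with your Step 1: $R_1$ moves points left and $R_2$ moves points right. This gives $R_1^k(p)<p-\chi$ and $R_2^k(p)>p+\chi$ for all $k\ge 1$.

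Therefore no switch between branches ever occurs. Every admissible word is a pure power $R_1^k$ or $R_2^k$, which is the special case $a=0$ or $b=0$ of the $R_2^aR_1^b$ shape you anticipated. Hence $\disc{\infty}\subseteq\{p\}\cup\{R_1^k(p)\}_{k\le k_1}\cup\{R_2^k(p)\}_{k\le k_2}$. Your Step 2 run-length estimates are fine and agree with the paper's, and they then yield the stated bound on $N_0$. If $\alpha_1+\alpha_2>1$, then $p$ lies in neither domain and $\disc{\infty}=\{p\}$ outright.
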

\begin{proof}
Notice that $R_1$ and $R_2$, thought of as $\bbR\to \bbR$ functions, are linear. Moreover, their unique fixed points are $x_0^1=1$ and $x_0^2=0$, respectively (indeed, $R_1(1)=1$ and $R_2(0)=0$). Consider some $x\in (0,1)$. By applying Lem~\ref{lem:distance from fixed point} to $R_1$ we get that 
$R_1(x)-x^1_0=\frac{1}{1-\alpha_2}(x-x^1_0)<x-x^1_0$ (since $\alpha_2\in (0,1)$ and $x-x^1_0<0$), and therefore $R_1(x)<x$, i.e., $R_1$ is ``left shifting''. Moreover, by Lem.~\ref{lem:distance from fixed point} we have that $\lim_{n\to\infty} R_1^n(x)=-\infty$. A similar analysis shows that $R_2$ is ``right shifting'' and $\lim_{n\to\infty} R_2^n(x)= \infty$.

For brevity, denote $\tau = \frac{\alpha_2}{\alpha_1 + \alpha_2}$ and $\chi = \frac{\alpha_1 \alpha_2}{\alpha_1 + \alpha_2}$. We show that there exists $N_0\in \bbN$ such that $A_{N_0}=\emptyset$. 
Intuitively, the idea is as follows: $R_1$ is left shifting, and its domain is below $\tau+\chi$. Therefore, any points in $A_i$ above $\tau+\chi$ can no longer decrease, and eventually escape beyond $1$, and therefore do not generate further bifurcation points. Similarly, $R_2$ is right shifting, and its domain is above $\tau-\chi$, so any points below it escape to below $0$. 

It is thus left to show 
%(see App.~\ref{app:bowtie finite bif})
 that after a finite number of iterations, points indeed go outside these intervals (See Fig.~\ref{fig:discontinuity dynamics}). 
We first show that $A_2\subseteq \{R_1(\tau),R_2(\tau)\}$ contains at most two points, one to the left of $I_2$ and one to the right of $I_1$. Since $R_2$ is right shifting and $R_1$ is left shifting, it follows that these two points eventually (after some $N_0$ iterations) escape the interval $(0,1)$, at which stage we have $A_{N_0}=\emptyset$.

%short Finally, the bound on $N_0$ follows by bounding the number of applications of $R_1$ (resp. $R_2$) to $\tau$ before it escapes the boundaries of $(0,1)$. Specifically, let $n$ be the number of iterations after which $R_2^n(\tau) \leq 1$, then by~\cref{lem:distance from fixed point} it holds that $\bigl(\frac{1}{1-\alpha_1}\bigr)^n \tau > 1$. Rearranging we get $n \leq \frac{\tau}{\log{(1 - \alpha_1)}}$, and by similar analysis for $R_1$ we get the upper bound. 

Recall that $\disc{1}=\{\tau\}$. We first show that if $\tau\in I_1$ then $R_1(\tau)$ falls to the left of $I_2$. Indeed:
\begin{align*}
        &\frac{\tau-\alpha_2}{1-\alpha_2}<\tau-\chi &\iff \\
        &\frac{\frac{\alpha_2}{\alpha_1 + \alpha_2} - \alpha_2}{1 - \alpha_2} < \frac{\alpha_2}{\alpha_1 + \alpha_2} - \frac{\alpha_1 \alpha_2}{\alpha_1 + \alpha_2}  & \iff \\
        &\frac{\alpha_2}{\alpha_1 + \alpha_2} - \alpha_2 < (1 - \alpha_2)(1 - \alpha_1)\bigl(\frac{\alpha_2}{\alpha_1 + \alpha_2}\bigr) & \iff \\
        & -1 < (\alpha_1 \alpha_2 - (\alpha_1 + \alpha_2)) \bigl(\frac{1}{\alpha_1 + \alpha_2}\bigr) & \iff \\
        &0 < \alpha_1 \alpha_2 & \text{which always holds.}
    \end{align*}
Similarly, if $\tau\in I_2$ then $R_2(\tau)$ falls to the right of $I_1$. Indeed:
 \begin{align*}
        &\frac{\tau}{1-\alpha_1}>\tau+\chi &\iff \\
        &\frac{\frac{\alpha_2}{\alpha_1 + \alpha_2}}{1 - \alpha_1} > \frac{\alpha_2}{\alpha_1 + \alpha_2} + \frac{\alpha_1 \alpha_2}{\alpha_1 + \alpha_2}  & \iff \\
        &\frac{\alpha_2}{\alpha_1 + \alpha_2} > (1 + \alpha_1)(1 - \alpha_1)\bigl(\frac{\alpha_2}{\alpha_1 + \alpha_2}\bigr) & \iff \\
        & 1 > 1 - \alpha_1^2  \iff \alpha_1^2>0 & \text{which always holds.}
    \end{align*}

    Finally, the bound on $N_0$ follows by bounding the number of applications of $R_1$ (resp. $R_2$) to $\tau$ before it escapes $(0,1)$. Let $n$ be the number of iterations after which $R_2^n(\tau) \leq 1$, then by Lem.~\ref{lem:distance from fixed point} it holds that $\bigl(\frac{1}{1-\alpha_1}\bigr)^n \tau > 1$. 
    Rearranging we get $n \leq \frac{\tau}{\log{(1 - \alpha_1)}}$, and by similar analysis for $R_1$ we get the upper bound.

\end{proof}
\begin{figure}[h]
\caption{The domains ${\color{red} I_1},{\color{blue} I_2}$ and the dynamics governed by ${\color{red} R_1},{\color{blue} R_2}$ starting from $\tau$. After a single iteration, ${\color{red} R_1}(\tau)$ is to the left of ${\color{blue} I_2}$, and ${\color{blue} R_2}(\tau)$ is to the right of ${\color{red} I_1}$. After several iterations, the points escape $(0,1)$.}
    \label{fig:discontinuity dynamics}
    \centering
    \begin{tikzpicture}
    \def\yt{0}
    \draw[ultra thick, red] (1,\yt + 0.05) -- (5,\yt + 0.05);
    \draw[ultra thick, blue] (3,\yt-0.05) -- (7,\yt-0.05);
    \draw[->, thick] (0,\yt) -- (8,\yt);
    \node[below] at (2,-0.3) {{\color{red} $I_1$}};
    \node[below] at (6,-0.3) {{\color{blue} $I_2$}};

    \node[below] at (4,\yt - 0.1) {$\tau$};
    \node[below] at (1,\yt - 0.1) {$\alpha_2$};
    \node[below] at (5.2,\yt - 0.1) {$\tau+\chi$};
    \node[below] at (2.8,\yt - 0.1) {$\tau-\chi$};
    \node[below] at (7,\yt - 0.1) {$1 - \alpha_1$};

    \node at (1,\yt) {\large(};
    \node at (5,\yt) {\large)};
    \node at (3,\yt) {\large(};
    \node at (7,\yt) {\large)};

    \draw (4,\yt + 0.1) -- (4,\yt - 0.1);

    \draw[->, blue] 
    (4,\yt) .. controls (4.5,\yt + 1) and (5,\yt + 1) .. (5.5,\yt+0.05)
    node[pos=0.5, yshift=1ex, font=\scriptsize] {$R_2$};

    \draw[->, blue] 
        (5.5,\yt) .. controls (5.5 + 1/3, \yt + 1) and (5.5 + 2/3,\yt + 1) .. (6.5, \yt+0.05)
        node[pos=0.5, yshift=1ex, font=\scriptsize] {$R_2$};
    \draw[->, blue] 
    (6.5,\yt) .. controls (6.5 + 1/3, \yt + 1) and (6.5 + 2/3,\yt + 1) .. (7.5, \yt+0.05)
    node[pos=0.5, yshift=1ex, font=\scriptsize] {$R_2$};

    \draw[->, red] 
        (4,\yt) .. controls (3.5,\yt + 1) and (3,\yt + 1) .. (2.5,\yt + 0.05)
        node[pos=0.5, yshift=1ex, font=\scriptsize] {$R_1$};
    
    \draw[->, red] 
        (2.5,\yt) .. controls (2.5 - 1/3, \yt + 1) and (2.5 - 2/3,\yt + 1) .. (1.5, \yt+0.05)
        node[pos=0.5, yshift=1ex, font=\scriptsize] {$R_1$};

    \draw[->, red]
        (1.5,\yt) .. controls (1.5 - 1/3, \yt + 1) and (1.5 - 2/3,\yt + 1) .. (0.5, \yt+0.05)
        node[pos=0.5, yshift=1ex, font=\scriptsize] {$R_1$};
    
    \end{tikzpicture}
    \end{figure}

    By Thms.~\ref{thm:disc finite then lasso} and \ref{thm:bowtie finite bif} we have that for repeated bidding games,
    %~\cref{alg:budget-MP} 
    the algorithm always halts, and we have a bound on $N_0$. This enables us to conclude the following.
    \begin{corollary}
    \label{cor:bowtie ptime}
    Given a repeated bidding game, 
    %the value 
    $\MP_i(\pi)$ is computable in time polynomial in $O(\max{\frac{\log{\frac{\alpha_i}{\alpha_i + \alpha_j}}}{\log{1 - \alpha_j}}})$ for $i\neq j \in \{1,2\}$. 
    \end{corollary}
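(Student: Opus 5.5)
The plan is to turn the semi-algorithm following Thm.~\ref{thm:disc finite then lasso} into an explicit algorithm whose running time is controlled by the bound on $N_0$ from Thm.~\ref{thm:bowtie finite bif}. Write $M=\max_{i\neq j}\frac{\log\frac{\alpha_i}{\alpha_i+\alpha_j}}{\log(1-\alpha_j)}$.

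\textbf{Step 1: compute $\disc{\infty}$.} Start from $\disc{1}=\{\tau\}$. By the proof of Thm.~\ref{thm:bowtie finite bif}, every later bifurcation point has the form $R_1^k(\tau)$ or $R_2^k(\tau)$, restricted to the domains $I_1,I_2$. Each orbit leaves $(0,1)$ after $O(M)$ iterations, because $R_2$ is right shifting and $R_1$ is left shifting. Hence $\disc{\infty}$ has $O(M)$ points. We compute them by iterating the two affine maps, stopping each orbit once it exits its domain, which costs $O(M)$ arithmetic operations.

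\textbf{Step 2: find the period.} The points of $\disc{\infty}$ split $[0,1]$ into $O(M)$ maximal intervals. Sort them. In the proof of Thm.~\ref{thm:disc finite then lasso} we saw that $\xb$ maps each such interval into a single interval, and by Lem.~\ref{lem:vine fixed by linear branch} each interval determines the vine played next. This gives a successor graph on $O(M)$ nodes, each with out-degree one; computing the graph takes time polynomial in $M$.

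We then simulate $c_{n+1}=\xb(c_n)$ from $c_0=B_1$, tracking only the interval containing $c_n$. After at most $O(M)$ steps an interval repeats, yielding a cycle $I_{N_0},\ldots,I_{N_0+k-1}$ of length $k=O(M)$ with associated leaves $\theta_1,\ldots,\theta_k$. In a repeated bidding game each vine has length $2$ (root then leaf), so
\[
\MP_i(\pi)=\frac{1}{2k}\sum_{j=1}^{k}\bigl(w_i(v_0)+w_i(\theta_j)\bigr).
\]
The finite prefix before the cycle does not affect the limit average.

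\textbf{Main obstacle.} Step 1 presumes that the orbits of $\tau$ are the only sources of bifurcations. We must check that points already outside $I_1\cup I_2$ generate no new preimages, and in particular handle the overlap $I_1\cap I_2$, where both $R_1$ and $R_2$ apply. The proof of Thm.~\ref{thm:bowtie finite bif} shows that after one step each orbit leaves the other map's domain, so each branch contributes a single chain and the count stays linear in $M$. A secondary concern is exact arithmetic: the $\alpha_i$ may be irrational, so we either assume a real-arithmetic model or compare rational approximations with enough precision, and this cost stays polynomial in $M$.
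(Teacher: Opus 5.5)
Your proposal is correct and takes essentially the paper's route: run the semi-algorithm behind Thm.~\ref{thm:disc finite then lasso}, with termination and the running-time bound supplied by Thm.~\ref{thm:bowtie finite bif}. You also make explicit what the paper leaves implicit, namely that $|\disc{\infty}|=O(M)$ bounds both the number of intervals and the pre-period plus period of the interval walk. The one loose end, which the paper shares, is an orbit that hits a point of $\disc{\infty}$ exactly; adding each such point as a singleton node of your successor graph handles it and keeps the node count $O(M)$.
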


%%%%%%%%%%%%%%%%%%%%%%%%%%%%%%%%    
\section{Discussion}
We analyze the path generated by the two types of known explicit optimal strategies in mean-payoff bidding games, show that it is ultimately periodic in certain graphs, and develop algorithms to reason about its performance. Interestingly, the discrete nature of the block strategy is an advantage in our study whereas in the study of {\em all-pay} bidding~\cite{AJZ21}, for which the budget strategies were first constructed, the continuous nature of the latter was essential. 
We list several directions for future work. 
First, we leave open questions; e.g., analyzing 
%showing that the path is ultimately periodic in 
general strongly-connected graphs and tightening the computational complexity.
% of reasoning about the performance of the generate play. 
Second, analyzing the generated play is a general phenomenon and can be studied in any game. The analysis is technically interesting in mean-payoff bidding games since optimal strategies are succinctly represented, and it is interesting to analyze games with other bidding mechanisms for which explicit constructions are known~\cite{AHI18,AJZ21}. 
%Particularly since there, the initial budget allocation is crucial. 
Finally, the strategies that we consider are designed offline to optimize against an adversary. An interesting direction for future research is to develop strategies that adapt online to the competitor they are paired with. We expect that the novel analysis techniques that we develop here will be useful to develop such strategies.

\subsection*{Acknowledgments}
This work was supported in part by the Israel Science Foundation, grant numbers 989/22 and 1679/21.

%% BioMed_Central_Bib_Style_v1.01

% Bibliography is supplied in observed.bbl for arXiv.

% Appendix
\appendix

\section{Vertex importance; strengths and potentials}
\label{app:strengths}
The \emph{strength} of a vertex, denoted $\St: V \rightarrow \Q$, intuitively measures its importance; if $\St(v) < \St(u)$, then winning a bidding in $u$ is more important than in $v$. 

\begin{definition}{\bf (Potentials and strengths).}
    The \emph{potential} is a function $\Pot:V \rightarrow \Q$ that for every vertex $v\in V$, satisfies 
    %$\Pot(v) = \frac{1}{2}(\Pot(v^+)+\Pot(v^-)) + w(v) - \MP(\RT(\G))$
     \[
         \Pot(v) = \frac{\Pot(v^+)+\Pot(v^-)}{2} + w(v) - \MP(\RT(\G))
         \]
where $v^+ = \arg\max_{u\in \neig(v)} \Pot(u)$ and $v^- = \arg\min_{u\in \neig(v)} \Pot(u)$. For $v \in V$, the \emph{strength} is a non-negative quantity
based on the 
% defined as the difference between the maximal and minimal 
potential in $\neig(v)$. Specifically:
    %    $\St(v) = \frac{1}{2}(\Pot(v^+)-\Pot(v^-))$.
         \[
         \St(v) = \frac{\Pot(v^+)-\Pot(v^-)}{2}
     \]
\end{definition}
The existence of a potential function is guaranteed following results on stochastic games~\cite{Put05}. 
\begin{example}
\label{ex:strengths}
We illustrate the idea behind the definition of strengths. The details, which are orthogonal to this paper, can be found in \cite{AHC19}. 
Consider the game $\G$ that is depicted in Fig~\ref{fig:block}. 
We have $\MP\big(\RT(\G)\big) = 0$. 
It is not hard to verify that the potentials and strengths satisfy the requirements. For example, 
$\Pot(v_0) = \frac12 \cdot (\Pot(v_1) + \Pot(\ell_3)) = \frac{1}{2} \cdot (-2 + 2) = 0$ and $\St(v_0) = \frac12 \cdot (\Pot(\ell_3) - \Pot(v_1)) = \frac{1}{2} \cdot (2- (-2)) = 2$. 
\Max's strategy ties between changes in energy and changes in budget. 
To illustrate, assume that there are no budget requirements, \Max bids $\St(v)$ at $v \in V$, and follows the outgoing red edge upon winning the bidding. We describe two \Min responses. First, \Min wins the bidding at $v_0$, proceeds to $v_1$, and pays \Max at least $2$ units of budget, then \Max wins the bidding at $v_1$, pays \Min $3$ units of budget, and proceeds to $\ell_1$, where the energy increases by $1$. All in all, \Max ``invests'' $1$ unit of budget in order to increase the energy by $1$. Second, \Min wins both biddings, i.e., she pays \Max at least $2$ at $v_0$ and $3$ at $v_1$ for a total of $5$. The game reaches $\ell_2$ where the energy decreases by $5$. Thus, \Max ``gains'' $5$ units of budget when the energy decreases by $5$. One can verify that the last vine satisfies this property.
\end{example}

\end{document}